\documentclass[journal]{IEEEtran}

\usepackage{amsmath,amssymb,amsfonts,amsthm,bm}
\usepackage{algorithm}
\usepackage{algorithmic}
\usepackage{graphicx}
\usepackage{booktabs}
\usepackage{multirow}
\usepackage[table]{xcolor}
\usepackage[caption=false,font=footnotesize]{subfig}
\usepackage{cite}
\usepackage{url}
\usepackage{hyperref}
\usepackage{afterpage}

\hypersetup{hidelinks}
\graphicspath{{figures/}}

\newcommand{\EE}{\mathbb{E}}

\newcommand{\tr}{\operatorname{tr}}
\newcommand{\relu}[1]{\left[#1\right]_{+}}

\definecolor{rmworldRed}{RGB}{178,0,0}
\definecolor{rmworldBlue}{RGB}{0,58,145}
\newcommand{\best}[1]{\textcolor{rmworldRed}{\textbf{#1}}}
\newcommand{\second}[1]{\textcolor{rmworldBlue}{\underline{#1}}}
\newtheorem{theorem}{Theorem}
\newtheorem{proposition}{Proposition}

\begin{document}

\title{RMWorld: Task-Aware Radio World Models with Value-of-Information Guided Multi-Trial Learning for Multi-UAV Communication Control}

\author{Xiucheng Wang, Nan Cheng, and Junxi Huang%
\thanks{Xiucheng Wang, Nan Cheng, and Junxi Huang are with the School of Telecommunications Engineering, Xidian University.}%
\thanks{Corresponding author: Nan Cheng.}}

\maketitle

\begin{abstract}
Reliable multi-UAV communication control depends on predicting which aerial links will serve traffic before measurements are available. Radio world models (radio WMs) make such planning tractable, but their errors are nonuniform: a globally accurate model may still fail along high-demand corridors or association boundaries where rate errors reverse control decisions. This mismatch creates a learning challenge. Link queries must reduce decision-relevant channel uncertainty, while counterfactual trials must be filtered so that biased rollouts do not corrupt the policy. Existing acquisition and model-based control treat these budgets separately, valuing uncertainty, coverage, or optimistic return rather than risk reduction. We present RMWorld, a task-aware radio-WM framework that couples value-of-information channel calibration with credibility-diversity multi-trial selection. A biased propagation formula is corrected by a Bayesian residual, and each link is valued by its exact one-label reduction in locally linearized task-integrated posterior rate variance. Counterfactual branches are selected by a task-gated log-determinant objective, followed by conflict projection and fixed-batch validation. We derive the variance-reduction identity, prove posterior task-risk equivalence and the submodular greedy guarantee, and establish a scoped first-order non-interference result. Across 100 paired 3GPP trials RMWorld reaches 0.949~bit/s/Hz task-weighted RMSE, and across 30 severe-load DeepMIMO trials it reduces median backlog by 0.967 versus Ensemble UCB at 37.5\% more offline rollouts.
\end{abstract}

\begin{IEEEkeywords}
  Multi-UAV communications, radio world models, multi-trial learning, model mismatch, value of information, submodular optimization.
\end{IEEEkeywords}

\section{Introduction}
\label{sec:introduction}

Channel knowledge is the substrate of multi-UAV communication control: it determines where aerial access points should move, which users should be associated, and which traffic should be served first. Accurate measurements or ray tracing can reveal that knowledge, but neither can exhaust the locations, users, and future trajectories that a controller must consider. A radio world model (radio WM) replaces this dense interaction with a predictive channel environment learned from propagation formulas, sparse labels, or neural features. This replacement is operationally important because every planning decision can be rehearsed in the radio WM before scarce flight time, spectrum, or high-fidelity simulation is consumed. It is also dangerous because the controller learns the model's errors together with its useful structure. The central design problem is therefore not merely to fit a globally accurate channel predictor or to optimize a policy inside it. It is to decide which evidence can make an imperfect radio WM reliable for the decisions that matter. Figure~\ref{fig:rmworld-overview} summarizes the resulting design logic through coupled evidence allocation at the channel and policy levels.

\begin{figure*}[!t]
  \centering
  \includegraphics[width=0.98\textwidth]{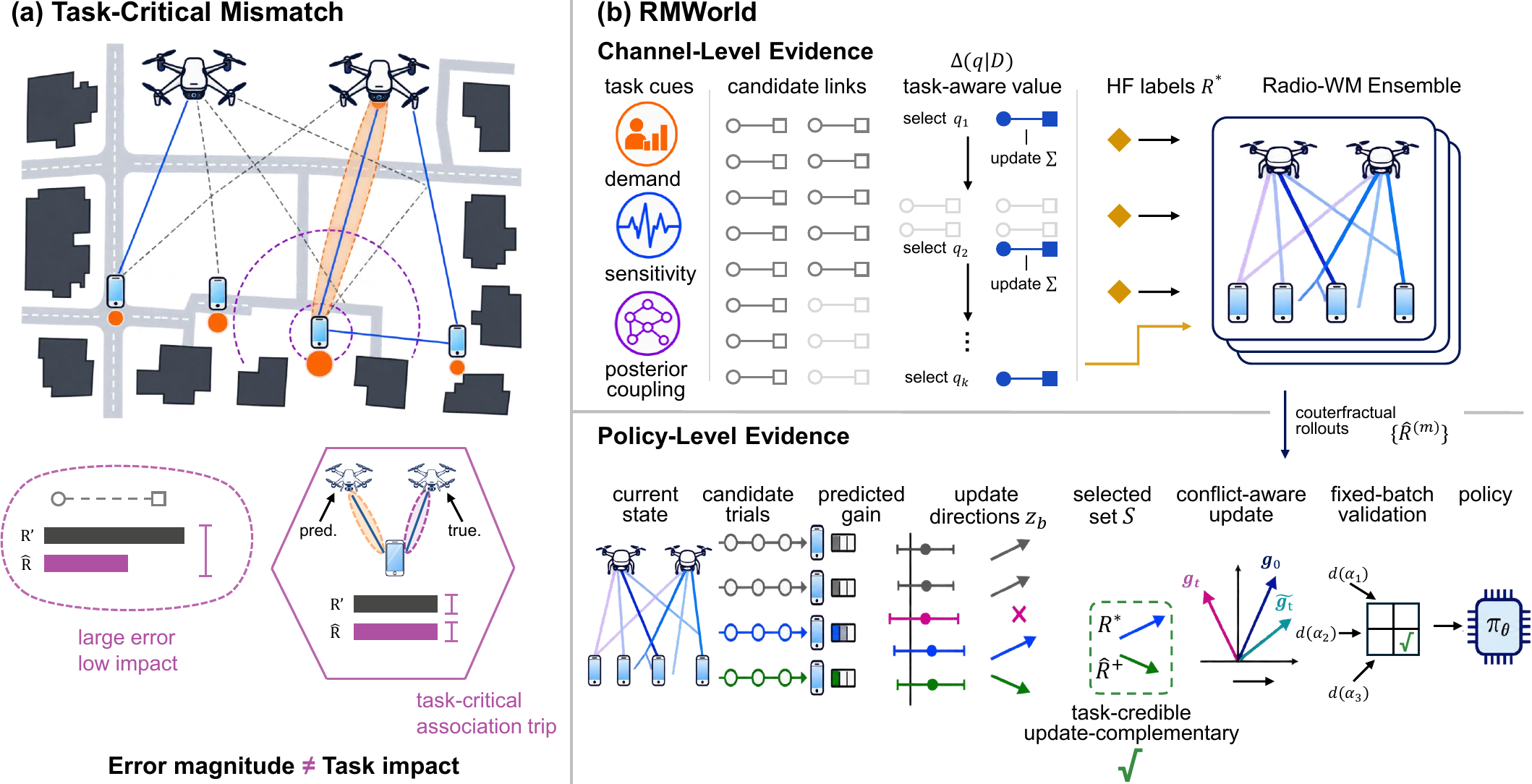}
  \caption{Motivation and workflow of RMWorld. Panel (a) illustrates why channel-error magnitude alone does not determine task impact, and panel (b) shows the coupled channel-level and policy-level evidence-allocation pipeline.}
  \label{fig:rmworld-overview}
\end{figure*}

\afterpage{%
\begin{figure}[!t]
  \centering
  \includegraphics[width=\columnwidth]{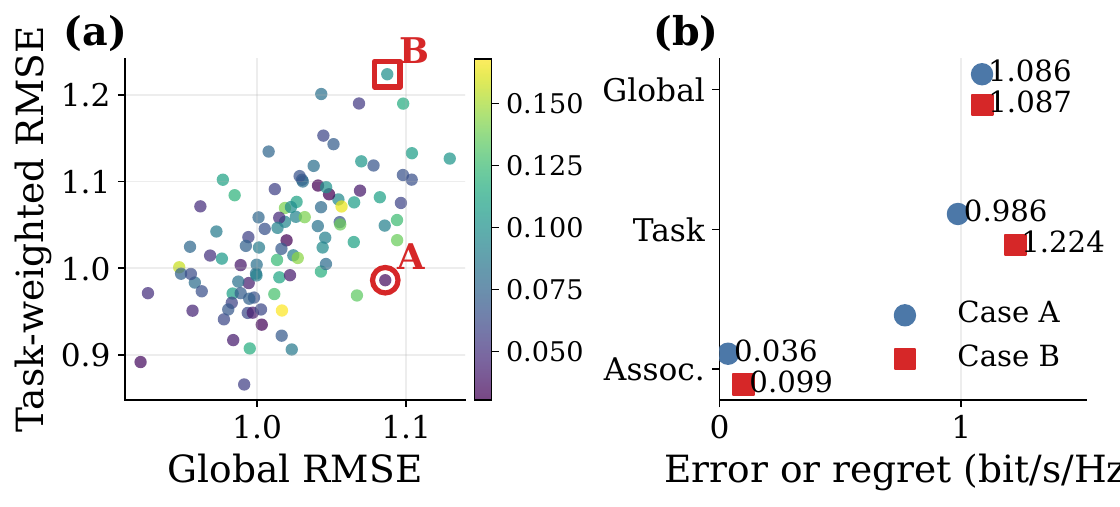}
  \caption{Global error is not task error in standardized 3GPP TR~36.777 analytic simulations. Panel (a) shows 100 frozen initial formula-radio-WM trials. Panel (b) isolates cases A and B through a fixed near-match rule on global RMSE: their global RMSEs differ by only 0.0013~bit/s/Hz, yet their task-weighted RMSEs differ by 0.2373~bit/s/Hz and their association regrets by 0.0634~bit/s/Hz. This pair is the \emph{largest} task gap among the 74 pairs that satisfy the same near-match rule, shown because it makes the mechanism legible; the population quantiles reported in Sec.~\ref{sec:3gpp-results} are the quantitative claim. These data are simulated, not measured.}
  \label{fig:motivation-diagnostic}
\end{figure}
}

The difficulty is that radio-WM fidelity is valuable only through its effect on a downstream decision. Errors near an association boundary can reverse the served user, while larger errors in regions the fleet never visits can be harmless. The same asymmetry appears along high-demand corridors, near blockage transitions, and wherever rate sensitivity changes sharply. Consequently, global reconstruction error can regard two radio WMs as equivalent even when one induces substantially higher task risk. The population comparison and matched spatial example in Fig.~\ref{fig:motivation-diagnostic} make this structural mismatch visible; the figure is evidence for the argument, not the argument itself. Three additional couplings make the problem harder. Pointwise uncertainty need not imply value of information, task relevance can repeatedly select correlated links, and channel labels operate at a different granularity from counterfactual policy trials. A sound learner must therefore allocate evidence across both levels without treating uncertainty, relevance, or diversity as interchangeable notions.

Existing solutions leave this requirement unresolved because they optimize useful but partial surrogates. Radio-map reconstruction and active surveying primarily seek map-wide fidelity, geometric coverage, or local uncertainty, while model-based controllers commonly retain trials by predicted return, action spread, or ensemble optimism. The first family can spend labels on propagation regions that do not affect service, and the second can trust a diverse but systematically biased counterfactual. Even a task-weighted uncertainty rule remains myopic when several selected links convey the same posterior information. Fig.~\ref{fig:motivation-acquisition} illustrates the missing distinction: useful queries do not merely cluster near users, but trace residual directions whose correction changes demand-weighted rate predictions across the task distribution. Sequential covariance updates then make already explained directions less attractive, converting coverage into nonredundant task evidence. This logic motivates a joint view of radio-WM calibration and multi-trial control as two budgeted evidence-selection problems. The formal gap is to quantify the reduction each candidate causes, rather than score only what the candidate currently looks like.

RMWorld fills this gap through value-of-information-guided multi-trial learning. At the channel level, it expresses the discrepancy between an estimated propagation formula and the realized formula channel as a Bayesian residual, then selects the link with the largest exact one-query reduction in locally linearized task-integrated posterior rate variance. This reduction couples task occupancy, rate sensitivity, candidate uncertainty, and cross-link covariance in one score. At the policy level, RMWorld evaluates several counterfactual trials but retains only directions that are both credible under ensemble disagreement and complementary in update space. A task-gated log-determinant objective supplies this credibility-diversity tradeoff, while conflict projection and fixed-batch validation limit harmful auxiliary updates. The analysis mirrors the two levels rather than forcing them under one claim: a posterior task-risk identity justifies channel acquisition, submodularity supports greedy branch selection, and a scoped first-order result characterizes projection. These mechanisms close the loop from task demand to channel evidence, from calibrated rate predictions to counterfactual labels, and from selected labels to the deployed controller. The central insight is that an imperfect radio WM should be improved and trusted according to the same downstream task geometry.

The contributions are fourfold.
\begin{enumerate}
  \item We formulate imperfect-radio-WM learning as two coupled evidence-allocation problems: which channel link should calibrate the radio WM and which counterfactual trial should update the controller. The formulation makes downstream task risk, rather than global map error, the common design currency.
  \item We derive an acquisition score equal to the exact one-label reduction in task-integrated posterior rate variance under a linear-Gaussian residual and local rate linearization. The derivation exposes how task geometry, rate sensitivity, and residual covariance jointly determine label value.
  \item We develop a policy-level selector whose task-gated log-determinant objective is monotone submodular, prove its cardinality-constrained greedy guarantee, and state a deliberately scoped first-order result for the unpreconditioned projected direction. A fixed-batch line search handles the finite-step case left outside that result.
  \item We provide a reproducible evidence chain across 100 paired 3GPP formula trials, 30 paired severe-load DeepMIMO trials, 30 load-boundary probes, and a paired neural-feature study. RMWorld attains 0.949~bit/s/Hz task-weighted RMSE in the formula study and reduces severe-load median backlog by 0.967 versus Ensemble UCB, while the audits expose unresolved decision endpoints and the 37.5\% rollout overhead.
\end{enumerate}

\afterpage{%
\begin{figure}[!t]
  \centering
  \includegraphics[width=\columnwidth]{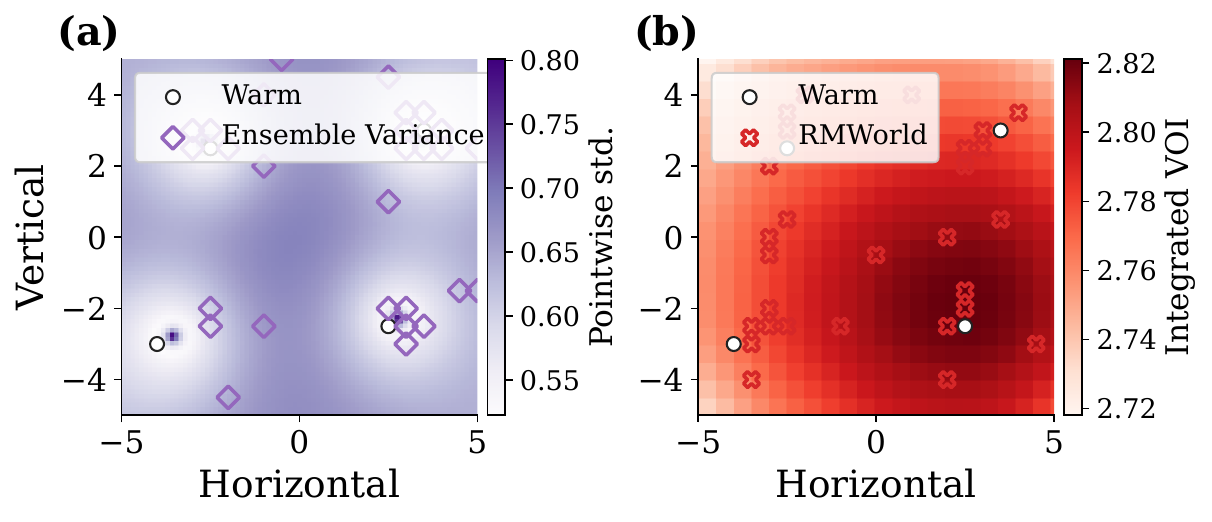}
  \caption{Query allocation on a representative frozen 3GPP trial. Panel (a) shows that ensemble variance follows locally uncertain links. Panel (b) shows that RMWorld values how one label reduces integrated rate uncertainty over the task distribution and updates its covariance after each selection. Both methods receive four warm-start labels and the same total budget of 32 labels.}
  \label{fig:motivation-acquisition}
\end{figure}
}

\section{Related Work}
\label{sec:related-work}

\subsection{Radio World Models for UAV Communication Control}

UAV-assisted communication couples trajectory planning, user association, and resource allocation through a spatially heterogeneous channel~\cite{zeng2019accessing,mozaffari2019tutorial}. Reinforcement learning addresses repeated movement and interference decisions that resist direct optimization~\cite{challita2019interference}, while simultaneous navigation and radio mapping makes channel acquisition part of the motion policy~\cite{zeng2021snarm}. Model-based policy search reduces physical interaction through probabilistic dynamics, ensembles, and short-horizon rollouts~\cite{deisenroth2011pilco,chua2018pets,janner2019mbpo}. These methods establish the value of learning an environment model, but their control guarantee remains conditional on where that model is accurate. Radio maps and channel knowledge maps (CKMs) provide the communication-specific representation by encoding location-dependent propagation structure~\cite{chen2017radiomaps,zeng2021ckm,zeng2024ckmtutorial}. Neural estimators reconstruct dense maps from geometry or sparse observations, including RadioUNet, completion autoencoders, and diffusion-based RadioDiff~\cite{levie2021radiounet,teganya2022completion,wang2025radiodiff}. Active UAV surveying further chooses trajectories that reduce mapping uncertainty~\cite{shrestha2023surveying}, and environment CKMs have been consumed by joint UAV trajectory and communication decisions~\cite{zhan2024eckm}. This literature makes radio environments learnable and actionable, yet it usually optimizes reconstruction, coverage, survey efficiency, or control with a given map rather than the downstream value of each newly acquired link.

\subsection{Task-Aware Acquisition and Multi-Trial Learning}

Ensembles provide competitive epistemic signals for conservative prediction and model-based control~\cite{kurutach2018modelensemble,ciosek2020betaqo}, which motivates variance and optimistic-confidence baselines. However, pointwise variance describes uncertainty at a candidate rather than the task-risk reduction caused by observing it, while task weighting alone can repeatedly select correlated links. Log-determinant experimental design addresses redundancy by rewarding complementary directions~\cite{krause2008near}, and cardinality-constrained greedy maximization has the classical approximation guarantee for normalized monotone submodular objectives~\cite{nemhauser1978analysis}. RMWorld uses these principles selectively rather than conflating them: channel acquisition is an integrated A-optimal variance reduction from a rank-one posterior update, whereas policy-trial selection uses a task-gated log-determinant objective. Counterfactual model rollouts can expose several plausible updates from one nominal state~\cite{chua2018pets,kurutach2018modelensemble}, but optimistic return alone is brittle under radio-WM bias and diversity alone can preserve reliably poor trials. The proposed selector therefore gates predicted improvement by ensemble disagreement and represents each trial by the gradient of the update it would actually induce. Conflict projection follows the broader principle of protecting a nominal objective from competing gradients, while fixed-batch validation is related to trust-region control of finite updates~\cite{yu2020pcgrad,schulman2015trust}. Unlike prior work that optimizes radio-map acquisition or multi-trial policy search in isolation, RMWorld connects them through downstream task geometry and explicitly separates the posterior guarantees from realized error under model misspecification.

\section{System Model and Problem Formulation}
\label{sec:system}

\subsection{Multi-UAV Communication Task}

Consider $N$ UAVs serving $U$ ground users over a planar region $\mathcal X$. UAV $i$ has position $\bm p_i^t\in\mathcal X$ at decision step $t$, and user $u$ has position $\bm q_u$ and remaining traffic $Q_u^t$. A shared policy $\pi_{\bm\theta}$ maps UAV positions, user locations, remaining demands, and geometric context to movement actions. Let $R_{iu}^{\star}(\bm p_i^t)$ denote the spectral efficiency under the unknown channel and let $a_{iu}^t\in\{0,1\}$ denote the scheduler assignment. The queue evolves according to
\begin{align}
  Q_u^{t+1}
  &=\relu{Q_u^t-\tau\sum_{i=1}^{N}a_{iu}^t
  R_{iu}^{\star}(\bm p_i^t)},
  \label{eq:queue-dynamics}
\end{align}
where $\tau$ absorbs the resource-block duration and bandwidth normalization. The control endpoint is the normalized cumulative backlog area
\begin{align}
  L(\bm\theta;R^{\star})
  &=\frac{1}{T\sum_u Q_u^0}
  \sum_{t=1}^{T}\sum_{u=1}^{U}Q_u^t,
  \label{eq:control-loss}
\end{align}
which penalizes both unfinished traffic and late service.

The policy is trained without dense access to $R^{\star}$. Instead, an ensemble of radio WMs $\{\widehat R^{(m)}\}_{m=1}^{M}$ supplies differentiable rollout rates. High-fidelity access reveals $R^{\star}_{iu}(\bm p)$ only for queried links, which may correspond to measurement, a detailed simulator, or held-out ray tracing. The budget is separated into link labels for radio-WM calibration and audited counterfactuals for policy training. This separation is essential because a branch can expose many links while only a small subset can be labeled.

\subsection{Formula Radio World Model and Task Error}

The primary evidence uses the UMa-AV model in 3GPP TR~36.777~\cite{3GPPTR36777}. For aerial height $h$ in meters, carrier frequency $f_c$ in GHz, and two- and three-dimensional distances $d$ and $d_{3\mathrm D}$ in meters, define
\begin{align}
 d_1(h)&=\max\{460\log_{10}h-700,18\},\notag\\
 p_1(h)&=\max\{4300\log_{10}h-3800,1\},
 \label{eq:3gpp-breakpoints}
\end{align}
and
\begin{align}
 P_{\mathrm L}(d,h)
 &=\begin{cases}
 1, & d\le d_1(h),\\
 \dfrac{d_1(h)}{d}+e^{-d/p_1(h)}\!\left(1-\dfrac{d_1(h)}{d}\right),
 & d>d_1(h).
 \end{cases}
 \label{eq:3gpp-los}
\end{align}
The LOS and NLOS path losses are
\begin{align}
 \mathrm{PL}_{\mathrm L}
 &=28+22\log_{10}d_{3\mathrm D}+20\log_{10}f_c,\notag\\
 \widetilde{\mathrm{PL}}_{\mathrm N}
 &=-17.5+(46-7\log_{10}h)\log_{10}d_{3\mathrm D}\notag\\
 &\quad+20\log_{10}\!\left(\frac{40\pi f_c}{3}\right),\notag\\
 \mathrm{PL}_{\mathrm N}
 &=\max\{\mathrm{PL}_{\mathrm L},\widetilde{\mathrm{PL}}_{\mathrm N}\}.
 \label{eq:3gpp-pathloss}
\end{align}
For state $s\in\{\mathrm L,\mathrm N\}$, $\gamma_s=10^{(P_{\rm tx}-\mathrm{PL}_s-P_{\rm n})/10}$ and $R_s=\log_2(1+\gamma_s)$. The expected rate is $P_{\mathrm L}R_{\mathrm L}+(1-P_{\mathrm L})R_{\mathrm N}$. The realized formula side uses the correct physical parameters and hidden spatially correlated, state-specific shadow fields. Each estimated formula member uses biased height, frequency, LOS prior, and path-loss offsets, followed by a sparse user-specific log-rate residual.

More explicitly, let $\bm\vartheta^{\star}$ collect the correct height, carrier, and standardized LOS/NLOS parameters, and let $Z_{s,u}(x)$ be a zero-mean, unit-variance spatial field that is hidden from the learner. The realized rate oracle is
\begin{align}
 \Gamma_{s,u,x}^{\star}
 &=10^{\{P_{\rm tx}-\mathrm{PL}_s(x;\bm\vartheta^{\star})
 -\sigma_s Z_{s,u}(x)-P_{\rm n}\}/10},\notag\\
 R_{u,x}^{\star}
 &=\sum_{s\in\{\mathrm L,\mathrm N\}}p_s(x;\bm\vartheta^{\star})
 \log_2(1+\Gamma_{s,u,x}^{\star}),
 \label{eq:true-formula-rate}
\end{align}
where $p_{\mathrm N}=1-p_{\mathrm L}$. Formula radio-WM member $m$ instead predicts
\begin{align}
 \widehat R^{(m)}_{u,x}(\mathcal D)
 &=R_{\rm 3GPP}(x;\bm\vartheta_m)
 \exp\!\left(\bm\phi_x^{\mathsf T}\widehat{\bm\beta}_{m,u}(\mathcal D)\right),
 \label{eq:estimated-formula-rate}
\end{align}
where $\bm\vartheta_m$ contains biased physical parameters, $\bm\phi_x$ is a normalized radial-basis design, and $\widehat{\bm\beta}_{m,u}$ is fit only from queried rates. Thus the radio-WM error used throughout the primary evidence is literally
\begin{align}
 e_{u,x}(\mathcal D)
 &=M^{-1}\sum_{m=1}^{M}\widehat R^{(m)}_{u,x}(\mathcal D)
 -R_{u,x}^{\star}.
 \label{eq:formula-error}
\end{align}
This construction is deliberately harder than estimating one scalar path-loss exponent. It combines parametric mismatch, nonlinear propagation through SNR, a spatially structured omitted field, and sparse correction, while retaining a transparent ground-truth formula on both sides. It therefore isolates whether evidence allocation can repair a consequential formula error without attributing the result to an opaque learned oracle.

For query set $\mathcal D$, the formula radio-WM error is the difference between ensemble-mean predicted and realized formula rates at the same geometry. Let $c_u$ denote user $u$'s demand, $d_u(x)$ the distance from $x$ to that user, and $d_{\rm cor,u}(x)$ the distance to the nearer line segment joining an initial UAV location to that user. The frozen task distribution is
\begin{align}
 w_{u,x}\propto c_u\bigg[0.12
 &+0.48\exp\!\left(-\frac{d_u^2(x)}{2(2.1)^2}\right)\notag\\
 &+0.40\exp\!\left(-\frac{d_{\rm cor,u}^2(x)}{2(0.85)^2}\right)\bigg].
 \label{eq:task-weight}
\end{align}
The nonzero background term prevents the learner from declaring the rest of the map irrelevant; the neighborhood and corridor terms encode where association and service are likely to consume rates. The primary channel endpoint is
\begin{align}
 E_w(\mathcal D)
 &=\left(
 \frac{\sum_{u,x}w_{u,x}
 \left(\overline R_{u,x}(\mathcal D)-R_{u,x}^{\star}\right)^2}
 {\sum_{u,x}w_{u,x}}
 \right)^{1/2}.
 \label{eq:weighted-rmse}
\end{align}
Global RMSE assigns all evaluated links equal weight. Association regret instead compares the realized utility of the user selected by $\overline R$ with that of the user selected by $R^{\star}$. These endpoints separate model fit, task-local fit, and a discontinuous downstream decision.

\subsection{Two-Level Budgeted Learning Problem}

At training episode $e$, the policy generates a nominal branch and $B$ counterfactual branches by perturbing the common action-noise realization. The ensemble predicts each branch loss, and the learner retains at most $K$ branches for local supervision. Within the high-fidelity audit, the learner can request only a limited set of link labels. The channel decision chooses links that reduce $E_w$, while the policy decision chooses branches that improve the nominal policy without redundant gradients. The overall problem can be written as
\begin{align}
 \min_{\bm\theta,\{\mathcal D_e,S_e\}}
 &\quad \EE\!\left[L(\bm\theta;R^{\star})\right]
 +\mu\,\EE\!\left[E_w(\mathcal D_E)\right] \notag\\
 \mathrm{s.t.}
 &\quad |\mathcal D_E|\le Q,\quad |S_e|\le K,
 \label{eq:joint-problem}
\end{align}
where $Q$ is the total link-label budget and $\mu$ states the calibration role. RMWorld solves the two discrete allocations with different objectives and guarantees. It does not claim that optimizing the local surrogates globally solves~\eqref{eq:joint-problem}.

\section{RMWorld: Task-Aware Radio-WM and Multi-Trial Learning}
\label{sec:method}

\subsection{Task-Integrated Channel Evidence}

For user $u$, let $\bm\beta_u$ be the coefficient of a log-rate residual with regularized posterior covariance $\bm\Sigma_u\succeq0$. Candidate query $q=(u_q,x_q)$ has residual feature $\bm x_q$ and observation variance $\sigma_q^2$. The rate derivative at evaluation link $(u,x)$ is $\bm a_{u,x}=\nabla_{\bm\beta_u}\widehat R_{u,x}$, which equals $\widehat R_{u,x}\bm\phi_x$ for the multiplicative residual used in the formula model. With joint task mass $W=\sum_{v,x}w_{v,x}$, RMWorld forms one block per user:
\begin{align}
 \bm H_u
 &=\frac{1}{W}\sum_x w_{u,x}\bm a_{u,x}\bm a_{u,x}^{\mathsf T}.
 \label{eq:integrated-gram}
\end{align}
The value assigned to query $q$ is
\begin{align}
 \Delta(q\mid\mathcal D)
 &=\frac{\bm x_q^{\mathsf T}\bm\Sigma_{u_q}\bm H_{u_q}
 \bm\Sigma_{u_q}\bm x_q}
 {\sigma_q^2+\bm x_q^{\mathsf T}\bm\Sigma_{u_q}\bm x_q}.
 \label{eq:channel-voi}
\end{align}
The common denominator $W$ is essential: it preserves relative demand when links from different users compete for one label budget. The numerator couples current coefficient uncertainty, joint task sensitivity, and the candidate feature. The denominator discounts observations already explained by the posterior design. After each selection, RMWorld updates the affected covariance block before rescoring all candidates, making a batch responsive to information already acquired.

\begin{theorem}[Exact local variance reduction]
\label{thm:voi}
Assume independent user residuals $y_q=\bm x_q^{\mathsf T}\bm\beta_{u_q}+\eta_q$ with $\eta_q\sim\mathcal N(0,\sigma_q^2)$, covariances $\bm\Sigma_u\succeq0$, and local rate linearizations represented by $\bm H_u\succeq0$. If $V(\mathcal D)=\sum_u\tr(\bm H_u\bm\Sigma_u)$, then one observation at $q$ reduces $V$ by exactly~\eqref{eq:channel-voi}, and $\Delta(q\mid\mathcal D)\ge0$.
\end{theorem}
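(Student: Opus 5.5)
The plan is to apply the Gaussian rank-one posterior update to the single covariance block touched by the query, and then read off the change in $V$ from linearity of the trace.

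First, decoupling. The residuals are independent across users and $y_q$ depends only on $\bm\beta_{u_q}$. Hence conditioning on $y_q$ leaves $\bm\Sigma_v$ unchanged for every $v\ne u_q$, and only the term $\tr(\bm H_{u_q}\bm\Sigma_{u_q})$ of $V(\mathcal D)$ can change. Fixing $u=u_q$, I will state the scalar-observation Kalman update, which is valid for any $\bm\Sigma_u\succeq0$ (no inverse of $\bm\Sigma_u$ is needed). With $s_q=\sigma_q^2+\bm x_q^{\mathsf T}\bm\Sigma_u\bm x_q$, it reads
\begin{align}
 \bm\Sigma_u^{+}=\bm\Sigma_u-\frac{\bm\Sigma_u\bm x_q\bm x_q^{\mathsf T}\bm\Sigma_u}{s_q}.
 \notag
\end{align}
I will justify this by the joint Gaussianity of $(\bm\beta_u,y_q)$: the cross-covariance is $\bm\Sigma_u\bm x_q$ and the marginal variance of $y_q$ is $s_q$. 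The Gaussian conditioning formula then gives the update directly, and the posterior covariance does not depend on the realized value $y_q$. This last point matters, since it makes the reduction exact rather than an expectation over $y_q$.

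Next, linearity and cyclicity of the trace give
\begin{align}
 V(\mathcal D)-V(\mathcal D\cup\{q\})
 =\frac{\tr(\bm H_u\bm\Sigma_u\bm x_q\bm x_q^{\mathsf T}\bm\Sigma_u)}{s_q}
 =\frac{\bm x_q^{\mathsf T}\bm\Sigma_u\bm H_u\bm\Sigma_u\bm x_q}{s_q},
 \notag
\end{align}
which is exactly~\eqref{eq:channel-voi}. The factor $1/W$ is already absorbed into $\bm H_u$ through~\eqref{eq:integrated-gram}. Summing over users therefore keeps the common normalization, so the score is consistent across competing users.

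For nonnegativity, let $\bm v=\bm\Sigma_u\bm x_q$. The numerator equals $\bm v^{\mathsf T}\bm H_u\bm v\ge0$ because $\bm H_u\succeq0$. The denominator satisfies $s_q\ge\sigma_q^2>0$ because $\bm\Sigma_u\succeq0$.

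The main obstacle is the degenerate case. If $\sigma_q^2=0$ and $\bm x_q^{\mathsf T}\bm\Sigma_u\bm x_q=0$, then $s_q=0$. In that case $\bm\Sigma_u^{1/2}\bm x_q=0$, so $\bm v=0$ and the observation carries no new information. I would assume $\sigma_q^2>0$ to exclude this case, or else define $\Delta=0$ there. A secondary point is that $V$ is only a local quantity under the linearization $\bm a_{u,x}$ held fixed at the current estimate. I will state explicitly that $\bm H_u$ is treated as constant across the update, which is what "local" means in the theorem.
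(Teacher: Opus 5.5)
Your proposal is correct and follows the paper's proof step for step. You apply the rank-one Gaussian update \eqref{eq:covariance-update} to the single affected block $u_q$, use trace cyclicity to obtain \eqref{eq:channel-voi}, and get the sign from $\bm H_{u_q}\succeq0$ and positivity of the denominator. Your additions only make explicit what the paper's short proof leaves implicit: the joint-Gaussian derivation of the update, the observation that the posterior covariance does not depend on the realized $y_q$, and the degenerate $s_q=0$ case.
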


\begin{proof}
The Gaussian rank-one covariance update is
\begin{align}
 \bm\Sigma_{u_q}^+
 &=\bm\Sigma_{u_q}-
 \frac{\bm\Sigma_{u_q}\bm x_q\bm x_q^{\mathsf T}\bm\Sigma_{u_q}}
 {\sigma_q^2+\bm x_q^{\mathsf T}\bm\Sigma_{u_q}\bm x_q}.
 \label{eq:covariance-update}
\end{align}
Only block $u_q$ changes. Substituting~\eqref{eq:covariance-update} into $V(\mathcal D)-V(\mathcal D\cup\{q\})$ and applying trace cyclicity yields~\eqref{eq:channel-voi}. Its denominator is positive and its numerator is nonnegative because $\bm H_{u_q}$ and $\bm\Sigma_{u_q}$ are positive semidefinite.
\end{proof}

The next result connects this algebraic quantity to the experimental endpoint rather than leaving $\sum_u\tr(\bm H_u\bm\Sigma_u)$ as an abstract design criterion.

\begin{proposition}[Posterior task-risk equivalence]
\label{prop:risk-equivalence}
Let $\rho_w(u,x)=w_{u,x}/W$, let $\delta R_{u,x}=\bm a_{u,x}^{\mathsf T}(\bm\beta_u-\widehat{\bm\beta}_u)$, and assume $\bm\beta_u\mid\mathcal D$ has mean $\widehat{\bm\beta}_u$ and covariance $\bm\Sigma_u$. Then
\begin{align}
 \EE_{(u,x)\sim\rho_w}\EE\!\left[\delta R_{u,x}^2\mid\mathcal D\right]
 &=\sum_u\tr(\bm H_u\bm\Sigma_u)=V(\mathcal D).
 \label{eq:posterior-task-risk}
\end{align}
Consequently, $\Delta(q\mid\mathcal D)$ is the exact reduction in posterior expected squared task-local rate error under the stated linearization.
\end{proposition}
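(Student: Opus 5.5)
The plan is to compute the inner conditional expectation pointwise for each evaluation link, then average over $\rho_w$ and regroup by user until the Gram blocks of~\eqref{eq:integrated-gram} appear. Both $\bm a_{u,x}$ and $\widehat{\bm\beta}_u$ are fixed once we condition on $\mathcal D$, since they are evaluated at the current posterior mean. Hence $\delta R_{u,x}$ is a linear functional of the centered vector $\bm\beta_u-\widehat{\bm\beta}_u$. We write $\delta R_{u,x}^2=\bm a_{u,x}^{\mathsf T}(\bm\beta_u-\widehat{\bm\beta}_u)(\bm\beta_u-\widehat{\bm\beta}_u)^{\mathsf T}\bm a_{u,x}$. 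Taking the conditional expectation and using the stated posterior covariance gives $\EE[\delta R_{u,x}^2\mid\mathcal D]=\bm a_{u,x}^{\mathsf T}\bm\Sigma_u\bm a_{u,x}=\tr(\bm a_{u,x}\bm a_{u,x}^{\mathsf T}\bm\Sigma_u)$ by trace cyclicity. This step needs only the first two posterior moments, not Gaussianity.

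Next we average over $(u,x)\sim\rho_w$. The outer expectation is the finite sum $\sum_{u,x}(w_{u,x}/W)\tr(\bm a_{u,x}\bm a_{u,x}^{\mathsf T}\bm\Sigma_u)$. We split this sum by user. For each $u$, the matrix $\bm\Sigma_u$ does not depend on $x$, so linearity of the trace lets us pull the sum over $x$ inside. This gives $\tr\bigl((W^{-1}\sum_x w_{u,x}\bm a_{u,x}\bm a_{u,x}^{\mathsf T})\bm\Sigma_u\bigr)=\tr(\bm H_u\bm\Sigma_u)$. Summing over $u$ yields $V(\mathcal D)$, which is~\eqref{eq:posterior-task-risk}.

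It is the common normalizer $W$, the total mass over all users, that makes $\rho_w$ a single joint distribution. With it, the blocks $\bm H_u$ add without per-user renormalization. A per-user normalizer would reweight users and break the identity.

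For the consequence, we apply the identity both before and after the query: once with the prior posterior $(\widehat{\bm\beta}_u,\bm\Sigma_u)$ and once with the updated posterior given $\mathcal D\cup\{q\}$. The difference of posterior task risks is then $V(\mathcal D)-V(\mathcal D\cup\{q\})$, and Theorem~\ref{thm:voi} states that this difference equals~\eqref{eq:channel-voi}.

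The main obstacle is interpretive rather than algebraic. The vectors $\bm a_{u,x}=\widehat R_{u,x}\bm\phi_x$ in principle depend on $\widehat{\bm\beta}_u$, and $\widehat{\bm\beta}_u$ changes after the query. I would handle this by stating explicitly that the linearization point, and hence $\bm H_u$, is held fixed across the one-step update. This is the meaning of ``under the stated linearization'', and the same convention is already implicit in Theorem~\ref{thm:voi}. I would also note that independence of the user residuals is what keeps the other blocks unchanged.
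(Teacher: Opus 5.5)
Your proposal is correct and follows the paper's proof essentially step for step: you compute the pointwise conditional second moment $\bm a_{u,x}^{\mathsf T}\bm\Sigma_u\bm a_{u,x}$, average it under $\rho_w$ with trace cyclicity to recover the blocks $\bm H_u$ of~\eqref{eq:integrated-gram}, and then invoke Theorem~\ref{thm:voi} for the consequence. Your explicit freezing of the linearization point across the one-step update is the convention the paper itself states after the proposition, where $\{\bm H_u\}$ is frozen within a batch. It is therefore a useful clarification, not a departure from the paper's argument.
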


\begin{proof}
Conditioned on $\mathcal D$, the zero-mean coefficient error gives
$\EE[\delta R_{u,x}^2\mid\mathcal D]=\bm a_{u,x}^{\mathsf T}\bm\Sigma_u\bm a_{u,x}$. Averaging with the joint distribution $\rho_w$, applying trace cyclicity, and substituting~\eqref{eq:integrated-gram} yields~\eqref{eq:posterior-task-risk}. The final statement follows from Theorem~\ref{thm:voi}.
\end{proof}

Proposition~\ref{prop:risk-equivalence} explains why the acquisition criterion is aligned with the squared quantity inside~\eqref{eq:weighted-rmse}; the reported RMSE is its square root evaluated against the realized formula oracle. It also separates RMWorld from three plausible shortcuts. Pointwise variance estimates uncertainty at the candidate itself, whereas~\eqref{eq:channel-voi} asks how that candidate changes risk over all task links. Multiplying pointwise variance by $w_q$ recognizes an important location but ignores cross-location covariance. Determinant design rewards spanning features but need not prioritize directions that have large mass in the joint task Gram blocks. RMWorld contains all three ingredients---candidate uncertainty, task sensitivity, and posterior coupling---in one reduction identity.

Theorems~\ref{thm:voi} and Proposition~\ref{prop:risk-equivalence} are one-query statements for a linear-Gaussian residual and a local rate approximation. Within one query batch, RMWorld freezes $\{\bm H_u\}$ at the pre-batch linearization and applies the identity repeatedly while updating the affected covariance block after every label. After residual refitting, it rebuilds the rate derivatives and Gram blocks for the next batch. We neither assert global submodularity nor attach an approximation ratio to this greedy A-optimal batch. This scope distinction is important because the policy-level log-determinant objective below has a different structure and supports a different guarantee.

\subsection{Task-Gated Multi-Trial Selection}

Let $L_{b,m}$ be the cumulative backlog predicted for counterfactual branch $b$ by ensemble member $m$, with $b=0$ denoting the nominal branch. The predicted improvement is $d_{b,m}=L_{0,m}-L_{b,m}$. RMWorld uses a normalized pessimistic estimate
\begin{align}
 \underline d_b
 &=\frac{\relu{\operatorname{mean}_m(d_{b,m})-
 \beta\operatorname{std}_m(d_{b,m})}}
 {\max\{\operatorname{median}_j|\operatorname{mean}_m(d_{j,m})|,
 \epsilon_d\}},\notag\\
 u_b&=\min\{\underline d_b,1\}.
 \label{eq:task-lcb}
\end{align}
A branch contributes task utility only when its predicted improvement remains positive after ensemble disagreement is charged. This gate avoids spending a selection slot on a direction that is diverse but not credibly useful.

The branch descriptor is aligned with the update that training will actually apply. Let $q_b^+$ and $q_b^-$ be soft probabilities for positive and negative branch labels, and let $\ell_b^+$ and $\ell_b^-$ be the corresponding local imitation and rejection losses. The induced descriptor and its reliability-weighted normalization are
\begin{align}
 \bm g_b
 &=\nabla_{\bm\theta}\left(q_b^+w_+\ell_b^+
 +q_b^-w_-\ell_b^-\right),\notag\\
 \bm z_b
 &=\sqrt{r_b}\frac{\bm g_b}
 {\max\{\|\bm g_b\|_2,\epsilon_g\}},
 \label{eq:gradient-descriptor}
\end{align}
where $r_b\in[r_{\min},1]$ summarizes the magnitude and sign consistency of ensemble comparisons with the nominal branch. For selected set $S$, RMWorld maximizes
\begin{align}
 F(S)
 &=\log\det\!\left(\bm I+
 \sum_{b\in S}\bm z_b\bm z_b^{\mathsf T}\right)
 +\lambda\sum_{b\in S}u_b,
 \quad |S|\le K.
 \label{eq:set-objective}
\end{align}
The first term rewards complementary update directions, while the second preserves task credibility. Sylvester's identity permits evaluation through the smaller selected-branch Gram matrix.

\begin{theorem}[Greedy branch guarantee]
\label{thm:submodular}
Assume that $\{\bm z_b,u_b\}_{b=1}^{B}$ are fixed within one selection step, $u_b\ge0$, and $F(\varnothing)=0$. Then~\eqref{eq:set-objective} is normalized, monotone, and submodular. The cardinality-$K$ greedy set $S_{\rm g}$ satisfies
\begin{align}
 F(S_{\rm g})
 &\ge(1-e^{-1})\max_{|S|\le K}F(S).
 \label{eq:greedy-guarantee}
\end{align}
\end{theorem}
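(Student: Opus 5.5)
The plan is to split $F=F_1+F_2$ with $F_1(S)=\log\det(\bm I+\sum_{b\in S}\bm z_b\bm z_b^{\mathsf T})$ and $F_2(S)=\lambda\sum_{b\in S}u_b$. We verify normalization, monotonicity and submodularity for each term separately. Since these properties are preserved under nonnegative sums (taking $\lambda\ge0$), they then hold for $F$. With this in hand, \eqref{eq:greedy-guarantee} follows directly from the classical Nemhauser--Wolsey--Fisher bound~\cite{nemhauser1978analysis} for cardinality-constrained greedy maximization of a normalized monotone submodular function.

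For $F_2$, the function is modular, with $F_2(\varnothing)=0$ and marginal gain $\lambda u_b\ge0$ by the hypothesis $u_b\ge0$. It is therefore normalized, monotone, and trivially submodular. For $F_1$, normalization holds because $\log\det\bm I=0$. For $b\notin S$, write $\bm A_S=\bm I+\sum_{c\in S}\bm z_c\bm z_c^{\mathsf T}\succ0$. The matrix determinant lemma (the rank-one case of Sylvester's identity) gives the marginal gain
\begin{align*}
 F_1(S\cup\{b\})-F_1(S)=\log\!\left(1+\bm z_b^{\mathsf T}\bm A_S^{-1}\bm z_b\right)\ge0,
\end{align*}
which is monotonicity.

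The main step is submodularity, i.e.\ diminishing returns. For $S\subseteq T$ and $b\notin T$, we have $\bm A_T=\bm A_S+\sum_{c\in T\setminus S}\bm z_c\bm z_c^{\mathsf T}\succeq\bm A_S\succ0$. Since matrix inversion is operator-antitone on positive definite matrices, $\bm A_T^{-1}\preceq\bm A_S^{-1}$. Hence $\bm z_b^{\mathsf T}\bm A_T^{-1}\bm z_b\le\bm z_b^{\mathsf T}\bm A_S^{-1}\bm z_b$, and because $\log(1+\cdot)$ is increasing, the marginal gain at $T$ is at most that at $S$. As a cross-check, one can note that $F_1$ equals twice the mutual information between $\bm\theta\sim\mathcal N(0,\bm I)$ and the noisy projections $\{\bm z_b^{\mathsf T}\bm\theta+\varepsilon_b\}$ with unit Gaussian noise. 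Submodularity then follows from conditional independence of the observations given $\bm\theta$, as in~\cite{krause2008near}.

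The only delicate points are bookkeeping. The descriptors $\bm z_b$ and gates $u_b$ must be held fixed throughout the greedy pass (the theorem's first hypothesis), since otherwise $F$ is not a fixed set function. The bound must also allow $|S|\le K$ rather than $|S|=K$, which monotonicity covers: the optimum may be taken with $|S|=K$ whenever $B\ge K$. Finally, the normalization $\|\bm z_b\|\le\sqrt{r_b}\le1$ plays no role in the proof beyond guaranteeing that all quantities are finite.
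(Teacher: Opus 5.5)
Your proof is correct and follows essentially the same route as the paper: the matrix determinant lemma gives the log-det marginal gain, antitonicity of inversion under $\bm A_S\preceq\bm A_T$ gives diminishing returns, the nonnegative modular task term preserves all three properties, and the Nemhauser--Wolsey--Fisher bound then gives~\eqref{eq:greedy-guarantee}. Your explicit requirement $\lambda\ge0$ is a useful addition, since the paper relies on it only implicitly.
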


\begin{proof}
For $\bm A_S=\bm I+\sum_{j\in S}\bm z_j\bm z_j^{\mathsf T}$, the matrix determinant lemma gives marginal information gain
\begin{align}
 \Delta_b^{\rm info}(S)
 &=\log\!\left(1+\bm z_b^{\mathsf T}\bm A_S^{-1}\bm z_b\right)\ge0.
\end{align}
If $S\subseteq T$, then $\bm A_S\preceq\bm A_T$ and $\bm A_S^{-1}\succeq\bm A_T^{-1}$, which establishes diminishing returns. The task term is nonnegative and modular, so it preserves normalization, monotonicity, and submodularity. The result in~\eqref{eq:greedy-guarantee} follows from cardinality-constrained greedy maximization~\cite{nemhauser1978analysis}.
\end{proof}

\subsection{Conflict Projection and Fixed-Batch Validation}

Let $\bm g_0=\nabla_{\bm\theta}L_0$ be the nominal gradient and $\bm g_t$ the aggregate selected-branch gradient. Motivated by gradient-surgery methods for competing objectives~\cite{yu2020pcgrad}, RMWorld removes only the component of $\bm g_t$ that conflicts with the nominal direction:
\begin{align}
 \widetilde{\bm g}_t
 &=\bm g_t-
 \frac{\min\{\langle\bm g_0,\bm g_t\rangle,0\}}
 {\max\{\|\bm g_0\|_2^2,\epsilon_g\}}\bm g_0,\notag\\
 \bm d(\alpha)
 &=\frac{\bm g_0}{\max\{\|\bm g_0\|_2,\epsilon_g\}}
 +\alpha\frac{\widetilde{\bm g}_t}
 {\max\{\|\widetilde{\bm g}_t\|_2,\epsilon_g\}},\notag\\
 \overline{\bm d}(\alpha)
 &=\frac{\bm d(\alpha)}{\max\{\|\bm d(\alpha)\|_2,1\}}.
 \label{eq:conflict-update}
\end{align}
This operation concerns optimization interference, not geometric collision avoidance. It cannot certify physical safety, and the experiments report collision diagnostics separately.

\begin{proposition}[First-order non-interference]
\label{prop:noninterference}
Suppose $\|\bm g_0\|_2^2\ge\epsilon_g$. Then $\langle\bm g_0,\widetilde{\bm g}_t\rangle\ge0$. Consequently, $\langle\bm g_0,\overline{\bm d}(\alpha)\rangle\ge\|\bm g_0\|_2/(1+\alpha)$ for $\alpha\ge0$. If $L_0$ is $L$-smooth, the unpreconditioned step $\bm\theta^+=\bm\theta-\eta\overline{\bm d}(\alpha)$ decreases $L_0$ whenever
\begin{align}
 0<\eta<\frac{2\|\bm g_0\|_2}{L(1+\alpha)}.
 \label{eq:step-bound}
\end{align}
\end{proposition}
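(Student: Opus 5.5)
The plan is to verify the three claims in order, each feeding the next: the sign of $\langle\bm g_0,\widetilde{\bm g}_t\rangle$, then the lower bound on $\langle\bm g_0,\overline{\bm d}(\alpha)\rangle$, then the descent lemma.

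\textbf{Step 1: nonnegativity of the projected inner product.} Under $\|\bm g_0\|_2^2\ge\epsilon_g$, the denominator in~\eqref{eq:conflict-update} equals $\|\bm g_0\|_2^2$. I will take the inner product of $\widetilde{\bm g}_t$ with $\bm g_0$ directly:
\begin{align*}
 \langle\bm g_0,\widetilde{\bm g}_t\rangle
 &=\langle\bm g_0,\bm g_t\rangle-\min\{\langle\bm g_0,\bm g_t\rangle,0\}\\
 &=\max\{\langle\bm g_0,\bm g_t\rangle,0\}\ge0.
\end{align*}

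\textbf{Step 2: lower bound on $\langle\bm g_0,\overline{\bm d}(\alpha)\rangle$.} Write $n_0=\max\{\|\bm g_0\|_2,\epsilon_g\}$ and $m_t=\max\{\|\widetilde{\bm g}_t\|_2,\epsilon_g\}$. By Step 1 and $\alpha\ge0$,
\[
\langle\bm g_0,\bm d(\alpha)\rangle=\frac{\|\bm g_0\|_2^2}{n_0}+\frac{\alpha}{m_t}\langle\bm g_0,\widetilde{\bm g}_t\rangle\ge\frac{\|\bm g_0\|_2^2}{n_0}.
\]
Both summands of $\bm d(\alpha)$ have norm at most one, so the triangle inequality gives $\|\bm d(\alpha)\|_2\le1+\alpha$. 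Hence $\max\{\|\bm d(\alpha)\|_2,1\}\le1+\alpha$. Since the inner product above is nonnegative, dividing by this normalizer gives
\[
\langle\bm g_0,\overline{\bm d}(\alpha)\rangle\ge\frac{\|\bm g_0\|_2^2}{n_0(1+\alpha)}.
\]

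To reach the stated bound $\|\bm g_0\|_2/(1+\alpha)$, I need $n_0=\|\bm g_0\|_2$, that is, $\|\bm g_0\|_2\ge\epsilon_g$. This is the one genuinely delicate point, because the hypothesis controls $\|\bm g_0\|_2^2$ rather than $\|\bm g_0\|_2$. I expect it to be the main obstacle in stating the result cleanly.

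I would resolve it with the standing convention $\epsilon_g\le1$. Then $\|\bm g_0\|_2\ge\sqrt{\epsilon_g}\ge\epsilon_g$, so $n_0=\|\bm g_0\|_2$. If $\epsilon_g>1$ were allowed, the proof still goes through, but with the constant $\|\bm g_0\|_2^2/\max\{\|\bm g_0\|_2,\epsilon_g\}$ in place of $\|\bm g_0\|_2$. I would note this explicitly rather than hide it.

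\textbf{Step 3: descent under $L$-smoothness.} I apply the standard descent lemma to $\bm\theta^+=\bm\theta-\eta\overline{\bm d}(\alpha)$:
\[
L_0(\bm\theta^+)\le L_0(\bm\theta)-\eta\langle\bm g_0,\overline{\bm d}(\alpha)\rangle+\frac{L\eta^2}{2}\|\overline{\bm d}(\alpha)\|_2^2.
\]
The normalization in~\eqref{eq:conflict-update} guarantees $\|\overline{\bm d}(\alpha)\|_2\le1$. Combining with Step 2 yields
\[
L_0(\bm\theta^+)-L_0(\bm\theta)\le-\eta\left(\frac{\|\bm g_0\|_2}{1+\alpha}-\frac{L\eta}{2}\right).
\]
This is strictly negative exactly when~\eqref{eq:step-bound} holds. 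The range in~\eqref{eq:step-bound} is nonempty because $\|\bm g_0\|_2>0$ by hypothesis.

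Finally, I would remark that the argument uses the raw gradient geometry. It therefore says nothing about preconditioned optimizers or about the finite-step behaviour beyond~\eqref{eq:step-bound}, which is the gap the fixed-batch validation is meant to cover.
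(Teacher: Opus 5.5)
Your proposal is correct and follows the paper's proof step for step: the inactive projection clamp makes the projection remove exactly the conflicting component, the triangle inequality gives $\|\bm d(\alpha)\|_2\le1+\alpha$, and the descent lemma with $\|\overline{\bm d}(\alpha)\|_2\le1$ yields the step-size condition. Your caveat is also a real point that the paper's blanket claim that ``the clamp is inactive'' passes over: the alignment bound needs $\max\{\|\bm g_0\|_2,\epsilon_g\}=\|\bm g_0\|_2$, which the hypothesis $\|\bm g_0\|_2^2\ge\epsilon_g$ guarantees only when $\epsilon_g\le1$ (for $\epsilon_g>1$, $\alpha=0$ and $\sqrt{\epsilon_g}\le\|\bm g_0\|_2<\epsilon_g$ the stated bound fails), so your explicit $\epsilon_g\le1$ convention is the right repair.
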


\begin{proof}
Under the stated norm condition, the numerical clamp in~\eqref{eq:conflict-update} is inactive. When $\langle\bm g_0,\bm g_t\rangle<0$, substitution makes the projected inner product zero; otherwise the projection leaves $\bm g_t$ unchanged. The triangle inequality gives $\|\bm d(\alpha)\|_2\le1+\alpha$, which yields the alignment lower bound after clipping. Applying $L$-smoothness to $-\eta\overline{\bm d}(\alpha)$ gives
\begin{align}
 L_0(\bm\theta^+)
 &\le L_0(\bm\theta)
 -\frac{\eta\|\bm g_0\|_2}{1+\alpha}+\frac{L\eta^2}{2}.
\end{align}
The right-hand side is strictly smaller under~\eqref{eq:step-bound}.
\end{proof}

The proposition applies before AdamW preconditioning and excludes the clamp-active near-zero-gradient regime, so RMWorld does not transfer this descent bound to every implemented step. Instead, it evaluates $\alpha$ in the finite set $\mathcal A=\{0,0.25\alpha_0,0.5\alpha_0,\alpha_0\}$ on a fixed common-random-number radio-WM batch. It returns zero when the best measured gain is below a margin and otherwise returns the minimizing candidate. The chosen update therefore cannot exceed the nominal update's loss on that reused validation batch. This is a finite-batch dominance statement, not a guarantee for the unknown physical channel.

\begin{algorithm}[!t]
\caption{Task-aware radio-WM and multi-trial RMWorld episode.}
\label{alg:rmworld}
\begin{algorithmic}[1]
\REQUIRE Policy $\pi_{\bm\theta}$, radio-WM ensemble, branch budget $K$, link budget $Q_e$.
\STATE Roll out the nominal branch and $B$ counterfactual branches with common random numbers.
\STATE Compute $u_b$, $r_b$, and update-aligned descriptors $\bm z_b$.
\STATE Greedily select $S_e$ by the marginal gain of~\eqref{eq:set-objective}.
\IF{link calibration is enabled}
  \STATE Expose candidate links from the audited counterfactual and build $\{\bm H_u\}$.
  \FOR{$q=1$ to $Q_e$}
    \STATE Maximize~\eqref{eq:channel-voi}, obtain the rate, and apply~\eqref{eq:covariance-update}.
  \ENDFOR
  \STATE Refit the residual radio WM and relabel selected branches.
\ENDIF
\STATE Form~\eqref{eq:conflict-update} and choose $\alpha$ on the fixed validation batch.
\STATE Update $\bm\theta$ and retain the best radio-WM checkpoint.
\end{algorithmic}
\end{algorithm}

For $B$ branches, $M$ models, rollout horizon $H_r$, and descriptor dimension $D$, pool construction costs $O(BMH_r)$ model steps and descriptor construction uses $B$ backward passes. Greedy policy selection costs $O(KBD)$ inner-product work with rank-one marginal updates. Channel acquisition evaluates candidate quadratic forms and updates a small residual covariance after each chosen link. These operations occur during offline training; deployment executes the same policy-forward structure as the single-trial reference.

\section{Experimental Evaluation}
\label{sec:experiments}
\label{sec:results}

\subsection{Evidence Sources and Protocols}

The evaluation is organized by claim rather than by dataset prestige. The 3GPP study isolates the causal channel-calibration mechanism because both the true and estimated formulas are explicit. The DeepMIMO study tests policy-level end-to-end control on held-out ray-tracing rate maps outside the formula family~\cite{alkhateeb2019deepmimo}. To avoid attributing a compound change to either level, the former holds the residual estimator fixed and varies the link selector, whereas the latter freezes the high-fidelity audit machinery and varies the branch selector. Algorithm~\ref{alg:rmworld} shows their composable loop; no DeepMIMO comparison is presented as a test of the channel-acquisition module. The neural-feature study changes the residual representation while preserving paired channel and task instances. A long-horizon stress test is retained as boundary evidence because longer training does not separate the methods reliably.

\begin{table}[!t]
\centering
\caption{Experimental protocols and inferential roles. The 3GPP blocks use TR~36.777 UMa-AV formula channels; the control blocks use the held-out DeepMIMO O1 slice. All scenes contain two UAVs and four users.}
\label{tab:protocols}
\resizebox{\columnwidth}{!}{%
\begin{tabular}{@{}l|c|c|l@{}}
\hline
Evidence block & Paired $n$ & HF budget & Endpoint / role \\
\hline
\rowcolor{blue!8} \best{3GPP formula radio WM} & \best{100} & \best{32 links} & wRMSE / primary mechanism \\
DeepMIMO control & 30; 10/load & 6 audits & Backlog / external + boundary \\
Neural-feature radio WM & 30 & 32 links & wRMSE / sensitivity \\
Long-horizon stress & 30 & 12 updates & Backlog / scope boundary \\
\hline
\end{tabular}}
\end{table}

The 3GPP experiment uses an $81\times81$ evaluation grid, which yields 26,244 user-location links, and a $21\times21$ query lattice per user, which yields 1,764 candidates. Every method receives four warm-start labels and 28 adaptive labels. The physical configuration uses a 60~m aerial height, 25~m base-station height, 3.5~GHz carrier, 30~dBm transmit power, 20~MHz bandwidth, and 7~dB noise figure. Three formula members start from different biased physical parameters and learn user-specific residual heads. Paired methods share the realized channel fields, task weights, formula ensemble, and query budget.

The DeepMIMO benchmark uses the O1 outdoor map at a held-out 42.4~m aerial slice. Four user positions, two UAVs, and two circular obstacles define each scene. Initial tasks are sampled as $(2+1.5\xi)\times8$ with $\xi\sim\mathcal U[0,1]$, placing the 30-seed main comparison in the severe-load regime selected before the formal seeds. Each seed uses six training episodes of 35 steps and a deterministic evaluation horizon of 160 steps. Multi-trial methods share a pool of five branches, retain two, use an ensemble of three radio WMs, and receive one high-fidelity audit per episode. A same-protocol boundary scan changes only the offered-load multiplier to 4, 6, or 10 and uses 10 paired seeds per load; these 30 additional runs are exploratory and are not added to the main Holm family.

\subsection{Baselines, Endpoints, and Statistics}

No published method jointly exposes the same link-label budget, counterfactual-trial budget, and imperfect-radio-WM interface, so labeling one imported algorithm as a direct end-to-end SOTA comparator would be misleading. We instead construct a category-complete state-of-the-art (SOTA) roster from the strongest contemporary principles in active radio surveying, Bayesian experimental design, ensemble model-based control, optimistic exploration, and gradient-space diversity. Each selector is reimplemented behind the same candidate interface, which makes the experiment a controlled comparison of evidence-selection principles rather than software stacks. The baselines are organized as follows.

\begin{itemize}
  \item \textbf{Channel-calibration selectors:} Random is the assumption-light floor. Spatial D-opt instantiates determinant-based sensor placement~\cite{krause2008near}; Ensemble Variance instantiates conservative epistemic acquisition~\cite{ciosek2020betaqo}; Task-Weighted Variance adds the task distribution to the uncertainty logic used in active radio surveying~\cite{shrestha2023surveying,zeng2024ckmtutorial}; and Gradient D-opt applies determinant design to rate-gradient features~\cite{krause2008near}. Together they span geometry, pointwise uncertainty, task-local uncertainty, and gradient diversity, which are the leading SOTA acquisition principles relevant to a sparse radio WM. RMWorld differs only in valuing the integrated posterior risk reduction caused by the label.
  \item \textbf{Multi-trial control selectors:} Single-Trial radio WM represents classical one-rollout model-based policy learning~\cite{deisenroth2011pilco,janner2019mbpo}. Random Multi-Trial tests the gain from a larger stochastic candidate pool; Action Entropy prioritizes behavioral spread; Ensemble UCB combines predicted return with epistemic optimism, following ensemble model-based control~\cite{chua2018pets,kurutach2018modelensemble,ciosek2020betaqo}; and Gradient Information selects diverse update directions using the gradient-space principle underlying experimental design and conflict-aware learning~\cite{krause2008near,yu2020pcgrad}. This group covers the strongest plausible alternatives based on trial count, action diversity, optimism, and update diversity. RMWorld adds task credibility to diversity rather than introducing a different controller backbone.
  \item \textbf{Fairness and inferential scope:} All multi-trial methods share the candidate pool, common random numbers, local positive/negative update, conflict projection, high-fidelity budget, and checkpoint rule. They differ in branch selection, except that RMWorld also invokes the finite radio-WM line search whose extra calls are reported explicitly. Single-Trial radio WM is an unequal-compute, low-cost reference and is not treated as an equal-budget selector. Primary conclusions therefore emphasize prespecified paired comparisons with the external multi-trial selectors and retain unresolved comparisons instead of declaring blanket SOTA superiority.
\end{itemize}

\begin{table}[!t]
\centering
\caption{Controlled baseline operators. Every selector uses the shared candidate pool and the budget of its evidence level; only the frozen ranking and batch-construction rules change.}
\label{tab:baseline-operators}
\scriptsize
\setlength{\tabcolsep}{2.2pt}
\resizebox{\columnwidth}{!}{%
\begin{tabular}{@{}l|l|l|l@{}}
\hline
Level & Selector & Frozen ranking operator & Batch rule \\
\hline
\multirow{6}{*}{Channel}
 & Random & Uniform sampling & Without replacement \\
 & Spatial D-opt & Spatial-feature log-det & Greedy \\
 & Ensemble Variance & Predictive standard deviation & Static top-$B$ \\
 & Task-Weighted Variance & Task weight $\times$ standard deviation & Static top-$B$ \\
 & Gradient D-opt & Rate-gradient log-det & Greedy \\
\rowcolor{blue!8} & \best{RMWorld} & \best{Integrated posterior-risk reduction} & \best{Sequential conditioning} \\
\hline
\multirow{6}{*}{Policy}
 & Single Trial & Nominal branch & One branch \\
 & Random Multi-Trial & Uniform sampling & Fixed-$K$ subset \\
 & Action Entropy & Action-descriptor log-det & Greedy \\
 & Ensemble UCB & Mean gain $+$ disagreement & Static top-$K$ \\
 & Gradient Information & Reliability-weighted gradient log-det & Greedy \\
\rowcolor{blue!8} & \best{RMWorld} & \best{Task-gated gradient log-det} & \best{Greedy $+$ batch validation} \\
\hline
\end{tabular}}
\end{table}

Table~\ref{tab:baseline-operators} exposes the controlled operator changed by each comparison, without assigning qualitative capabilities to a method. On the channel side, Ensemble Variance ranks candidates by predictive standard deviation, Task-Weighted Variance multiplies that score by the frozen task weight, and the two D-opt variants greedily maximize spatial- or rate-feature log-determinants. RMWorld instead evaluates how a candidate changes integrated posterior risk after conditioning, so the value-of-information score automatically discounts evidence made redundant by earlier queries. On the policy side, Action Entropy applies the same corrected identity-regularized log-determinant operator to action descriptors, Ensemble UCB ranks mean predicted improvement plus ensemble disagreement, and Gradient Information applies log-determinant design to reliability-weighted update gradients. RMWorld changes this last operator by adding the nonnegative task lower-confidence term in~\eqref{eq:set-objective}. Hyperparameters, secondary tie-breaking by candidate index, and the candidate order were frozen before formal seeds, while common random numbers ensure that a selector cannot benefit from an easier rollout realization. The comparison therefore isolates alternative evidence-selection rules under shared models, budgets, and optimizers.

We report medians and interquartile ranges because several endpoints are skewed. Paired differences use percentile bootstrap confidence intervals, exact win/tie/loss counts, and two-sided Wilcoxon signed-rank tests. Holm correction is applied over each prespecified family of baseline comparisons. A comparison is described as supported only when the adjusted test rejects and the paired direction favors RMWorld. Association regret, collision incidence, label yield, and representative trajectories remain secondary diagnostics unless a separate corrected test is reported. Across all tables, light-blue rows identify the primary evidence block, oracle or proposed configuration, frozen operating point, or decisive RMWorld-family contrast; boldface marks the key value, best value, or statistically resolved effect, and underlining marks a descriptive runner-up. These visual cues guide reading but do not replace confidence intervals or multiplicity-corrected tests.

\subsection{Paired Estimands and Leakage Controls}

The experimental unit is an entire frozen task and channel realization, not an individual grid link, time step, or UAV. In the 3GPP block, one trial fixes three independent random streams: the spatial shadow realization, the initial formula ensemble, and the task distribution. All selectors receive exactly those objects, the same four user-balanced warm links, the same candidate lattice, and the same 28 remaining labels. A fresh residual estimator is constructed for every method, preventing calibrated coefficients or queried links from leaking across a paired comparison. The $81\times81$ evaluation grid is never used as training data: only selected points on the nested $21\times21$ candidate lattice reveal the realized rate. Evaluation on all 26,244 user-location pairs occurs after fitting and therefore measures spatial generalization from 32 labels rather than interpolation over a fully observed map.

The realized 3GPP oracle uses smooth state-specific shadow fields produced from independent latent Gaussian fields and scaled by the standardized LOS and NLOS shadow terms. The estimated members do not observe those fields. Their fixed bias templates are listed in Table~\ref{tab:formula-ensemble}; small trial-wise jitter is applied to each template before selection begins. The templates straddle the true 60~m and 3.5~GHz configuration but do not include the hidden shadow realization. Each user-specific correction has an intercept plus a normalized $4\times4$ radial-basis design, uses unit residual observation variance in the acquisition covariance, and applies ridge penalties 1 (intercept) and 12 (other coefficients). Residual heads bootstrap only the queried log-rate ratios. The resampling indices are common across methods at every checkpoint, so their warm-start fits are identical and subsequent fitted differences originate from selected evidence rather than method-specific bootstrap noise.

\begin{table}[!t]
\centering
\caption{Primary formula-radio-WM templates before trial-wise jitter. Offsets $b_P$, $b_{\rm L}$, and $b_{\rm N}$ act on the LOS logit and LOS/NLOS path loss.}
\label{tab:formula-ensemble}
\resizebox{\columnwidth}{!}{%
\begin{tabular}{@{}l|r|r|r|r|r@{}}
\hline
Member & $h$ (m) & $f_c$ (GHz) & $b_P$ & $b_{\rm L}$ (dB) & $b_{\rm N}$ (dB) \\
\hline
\rowcolor{blue!8} \best{True formula} & \best{60} & \best{3.50} & 0 & 0 & 0 \\
Radio WM 1 & 42 & 3.20 & -0.65 & 1.50 & 3.00 \\
Radio WM 2 & 52 & 3.50 & -0.25 & 0.50 & 1.00 \\
Radio WM 3 & 72 & 3.80 & 0.20 & -0.50 & -1.00 \\
\hline
\end{tabular}}
\end{table}

For DeepMIMO, paired seeds fix the rate map, UAV starts, user demands, obstacles, policy initialization, and common branch perturbations. The multi-trial selectors share candidate count, retained count, high-fidelity audits, and local update machinery; radio-WM rollout calls are recorded rather than assumed equal. This distinction matters because a link-query budget is a data constraint, while a rollout count is a computational constraint. The low-cost Single-Trial radio WM violates the latter matching condition and is therefore labeled as a reference. The neural-feature study returns to exact link-budget matching and reuses the first 30 primary channel/task pairs, changing only the residual feature map and its representation seed.

\begin{table}[!t]
\centering
\caption{Frozen paired estimands and inferential scope. Formula improvements are comparator minus RMWorld; backlog differences are RMWorld minus comparator. $H$ is the Holm family size.}
\label{tab:estimands}
\resizebox{\columnwidth}{!}{%
\begin{tabular}{@{}l|c|l|c|l@{}}
\hline
Block & $n$ & Paired estimand & $H$ & Role \\
\hline
\rowcolor{blue!8} \best{3GPP formula} & \best{100} & wRMSE gain & \best{5} & \best{Primary} \\
DeepMIMO control & 30; 10/load & Backlog difference & 5; -- & External / boundary \\
Neural features & 30 & wRMSE gain & 5 & Sensitivity \\
Internal audit & 30/row & Backlog difference & 10 & Attribution \\
Stress tests & 30 & Median [IQR] & -- & Boundary \\
\hline
\end{tabular}}
\end{table}

Table~\ref{tab:estimands} records the sign convention and statistical family before presenting results. Bootstrap resampling operates on paired trial differences, so its interval targets the median within-instance effect rather than the difference between two unpaired marginal medians. Win/tie/loss counts expose sign stability that a single central estimate can conceal, while rank-biserial correlation reports paired effect direction. A result is not promoted when a nominal test and bootstrap interval disagree, when multiplicity correction removes significance, or when the comparator receives a materially different compute budget. This decision rule is responsible for the deliberately narrower DeepMIMO claim in Sec.~\ref{sec:deepmimo-results}.

\subsection{Primary 3GPP Formula-Radio-WM Evidence}
\label{sec:3gpp-results}

\begin{figure*}[!t]
  \centering
  \includegraphics[width=\textwidth]{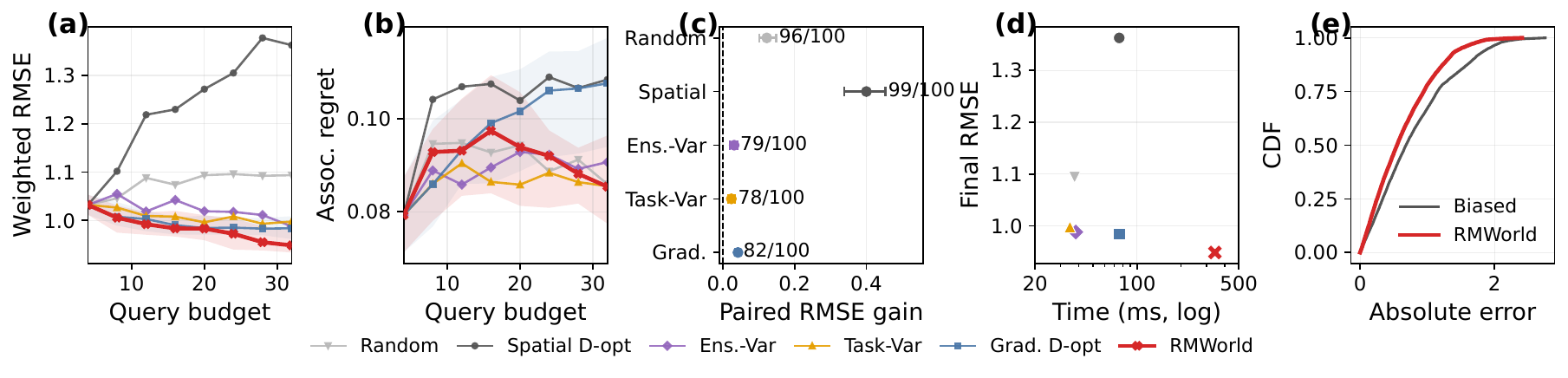}
  \caption{Five-panel primary result from 100 paired standardized 3GPP TR~36.777 UMa-AV analytic simulations. The panels report (a) task-weighted RMSE versus budget, (b) association regret versus budget, (c) paired final effects with bootstrap intervals and win counts, (d) the final error-cost frontier, and (e) the spatial absolute-error distribution in the representative mechanism trial. Curves show medians, and highlighted bands are 95\% bootstrap confidence intervals. All methods use 32 link labels including four warm-start labels.}
  \label{fig:3gpp-primary}
\end{figure*}

\begin{table}[!t]
\centering
\caption{Primary 3GPP formula-radio-WM evidence over 100 paired trials at 32 labels. The upper panel reports marginal endpoints; the lower panel reports comparator-minus-RMWorld wRMSE effects. Positive effects and RBC favor RMWorld. Tests use paired Wilcoxon signed ranks with Pratt zeros and Holm correction.}
\label{tab:3gpp-primary}
\scriptsize
\setlength{\tabcolsep}{2.7pt}
\resizebox{\columnwidth}{!}{%
\begin{tabular}{@{}l|c|c|r@{}}
\hline
Method & wRMSE [IQR] & Regret [IQR] & Time (ms) \\
\hline
Random & 1.094 [1.012,1.200] & 0.086 [0.066,0.122] & 37.3 \\
Spatial D-opt & 1.363 [1.204,1.569] & 0.108 [0.077,0.159] & 75.6 \\
Ens.-Var & 0.988 [0.940,1.050] & 0.091 [0.067,0.122] & 38.0 \\
Task-Var & 0.998 [0.923,1.050] & \second{0.086} [0.064,0.111] & \best{34.7} \\
Grad. D-opt & \second{0.984} [0.927,1.096] & 0.108 [0.075,0.133] & 76.0 \\
\rowcolor{blue!8} \best{RMWorld} & \best{0.949} [0.905,1.020] & \best{0.085} [0.064,0.110] & 342.8 \\
\rowcolor{gray!15} RMWorld gain vs. Task-Var & \best{4.9\%} & \best{0.001} & $+308.1$ \\
\hline
\end{tabular}}
\vspace{2pt}

\resizebox{\columnwidth}{!}{%
\begin{tabular}{@{}l|c|c|c|c@{}}
\hline
Comparator & Gain [95\% CI] & W/T/L & $p_{\rm Holm}$ & RBC \\
\hline
Random & 0.123 [0.102,0.149] & 96/0/4 & $<0.001$ & 0.977 \\
Spatial D-opt & 0.400 [0.338,0.453] & 99/0/1 & $<0.001$ & 0.999 \\
Ens.-Var & 0.032 [0.019,0.041] & 79/0/21 & $<0.001$ & 0.710 \\
Task-Var & 0.024 [0.015,0.035] & 78/0/22 & $<0.001$ & 0.634 \\
Grad. D-opt & 0.042 [0.034,0.049] & 82/0/18 & $<0.001$ & 0.867 \\
\hline
\end{tabular}}
\end{table}

At the fixed 32-label budget, RMWorld attains a median task-weighted formula RMSE of $0.949$~bit/s/Hz. Ensemble Variance, Task-Weighted Variance, and Gradient D-opt reach $0.988$, $0.998$, and $0.984$~bit/s/Hz. The corresponding paired improvements are $0.032$, $0.024$, and $0.042$~bit/s/Hz, and RMWorld wins 79, 78, and 82 of 100 trials. Every comparison remains significant after Holm correction with $p_{\rm Holm}<0.001$. The primary claim is therefore consistent reduction of task-weighted formula error, not an order-of-magnitude effect. Spatial D-opt is the weakest selector despite covering the candidate geometry broadly, and its $1.363$~bit/s/Hz median shows that parameter-space volume and rate-domain task error can diverge sharply under nonlinear LOS/NLOS mixing. Pointwise Ensemble Variance is much stronger, which confirms that uncertainty matters, but it does not account for how a label changes integrated task error. Task-Weighted Variance narrows the gap, yet RMWorld still wins 78 paired trials because covariance reduction distinguishes an informative feature from a merely important location. Gradient D-opt wins on some instances but loses to RMWorld in 82 trials because its determinant objective does not directly minimize integrated posterior rate variance. These contrasts connect the aggregate gain to the value-of-information mechanism rather than to any one spatial sampling heuristic.

The association endpoint provides a cautious decision-level check. RMWorld has the lowest descriptive median regret, $0.085$~bit/s/Hz, compared with $0.086$ for Task-Weighted Variance and Random and $0.108$ for Gradient D-opt. Intermediate regret is nonmonotone in Fig.~\ref{fig:3gpp-primary}, which is expected because user association changes discontinuously at an argmax boundary. We therefore do not infer that every RMSE improvement must reduce regret or claim a corrected significance result for this endpoint. The role of this metric is to confirm that the primary calibration gain is not obviously purchased by worse association.

\begin{figure}[!t]
  \centering
  \includegraphics[width=\columnwidth]{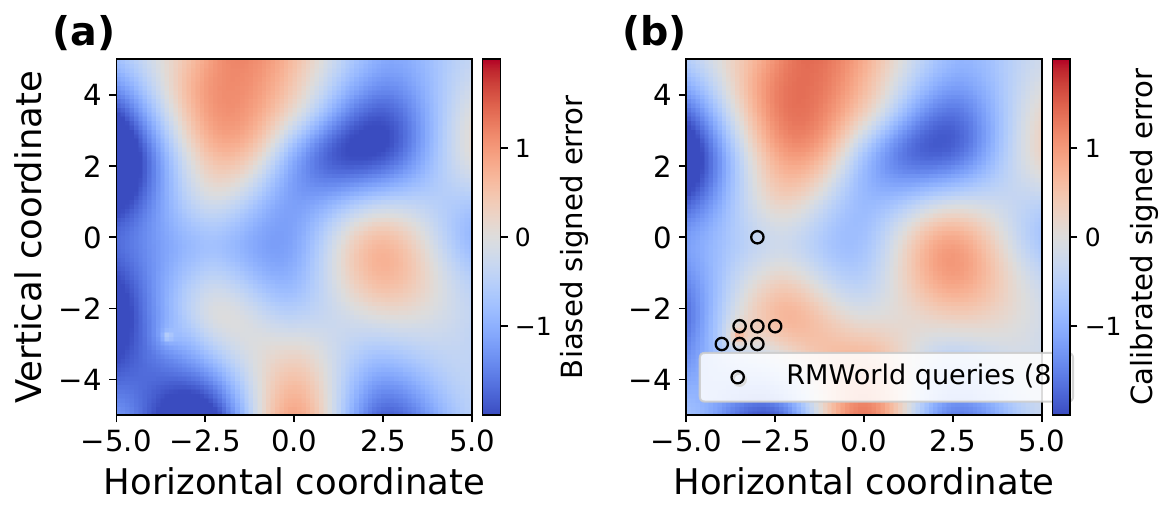}
  \caption{Spatial calibration mechanism in the median-RMWorld trial for the user with the largest task-weighted gain. Panel (a) shows the biased signed formula error, and panel (b) overlays RMWorld queries on the calibrated signed error. This standardized analytic realization explains the aggregate result but does not estimate a population effect.}
  \label{fig:3gpp-mechanism}
\end{figure}

Fig.~\ref{fig:3gpp-mechanism} traces the aggregate improvement to a spatial mechanism. The initial formula residual is structured rather than independent, so nearby links share both bias and posterior information. The task distribution concentrates on service corridors and user neighborhoods instead of the full grid. RMWorld selects links whose residual features couple those regions to large rate derivatives, then shifts the task-local error distribution leftward. This evidence is consistent with the variance-reduction derivation, although a single representative map is not used as a statistical claim. The same mechanism carries a clear offline price: in the deterministic rerun, RMWorld takes a median $342.8$~ms per trial, compared with $34.7$~ms for Task-Weighted Variance and $76.0$~ms for Gradient D-opt. The ratios are approximately $9.9\times$ and $4.5\times$, respectively, because integrated quadratic forms are reevaluated for all candidate links after each sequential covariance update. This computation changes training-time evidence allocation but does not add a channel-acquisition module to deployment inference.

\begin{figure}[!t]
  \centering
  \includegraphics[width=\columnwidth]{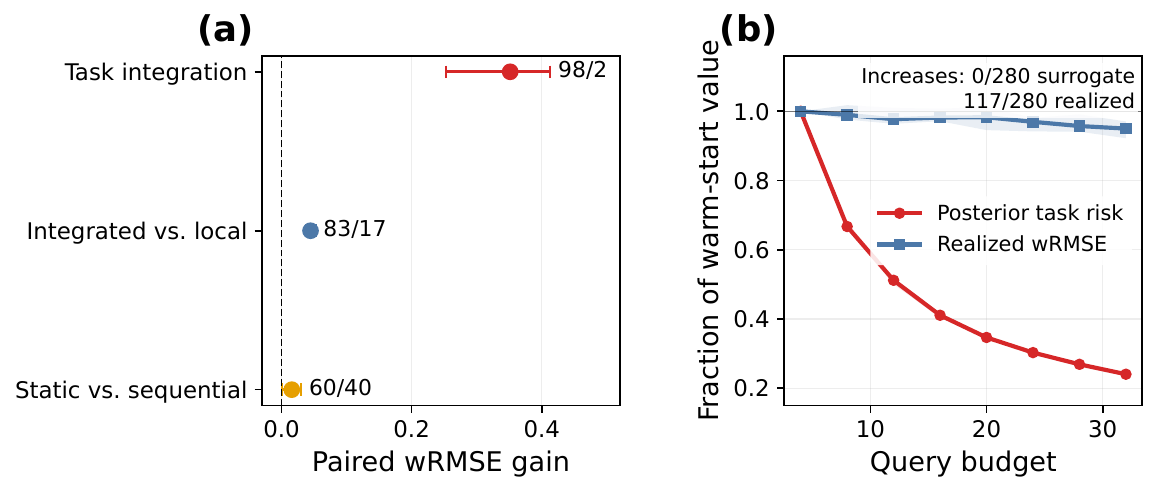}
  \caption{Attribution and theory--endpoint boundary. Panel (a) reports paired 32-label wRMSE effects with 95\% bootstrap intervals and win/loss counts. Panel (b) normalizes the posterior surrogate and realized wRMSE to each trial's warm-start value over 40 instrumented trials.}
  \label{fig:3gpp-audit}
\end{figure}

\begin{table}[!t]
\centering
\caption{Controlled 3GPP attribution and theory--endpoint audit. Positive gain favors the first named method. Holm correction covers three contrasts.}
\label{tab:3gpp-mechanism-audit}
\scriptsize
\setlength{\tabcolsep}{2.5pt}
\resizebox{\columnwidth}{!}{%
\begin{tabular}{@{}l|c|c|c@{}}
\hline
Arm & Task scope & Seq. & wRMSE \\
\hline
Task-Var & Local & No & 0.998 \\
Seq. Task-Var & Local & Yes & 0.983 \\
A-opt ($\bm H=\bm I$) & None & Yes & 1.306 \\
\rowcolor{blue!8} \best{RMWorld} & \best{Integrated} & Yes & \best{0.949} \\
\hline
Contrast & Gain [95\% CI] & W/T/L & $p_{\rm H}$ \\
\hline
RMWorld vs. A-opt & \best{0.352 [0.253,0.413]} & 98/0/2 & \best{$<0.001$} \\
RMWorld vs. Seq. Task-Var & \best{0.045 [0.037,0.052]} & 83/0/17 & \best{$<0.001$} \\
Static vs. sequential Task-Var & 0.016 [0.0001,0.030] & 60/0/40 & 0.021 \\
\hline
Quantity & Final reduction & Increases & Final worse \\
\hline
\rowcolor{blue!8} \best{Posterior surrogate} & \best{75.9\%} & \best{0/280} & -- \\
Realized wRMSE & 5.0\% & 117/280 & 9/40 \\
\hline
\end{tabular}}
\end{table}

The attribution audit separates task integration from within-batch re-scoring under the same paired instances. Replacing the identity Gram matrix by the joint integrated task Gram yields a median $0.352$~bit/s/Hz gain and wins 98 of 100 trials. Against sequential pointwise Task-Var, integration contributes $0.045$ and wins 83 trials. The sequential-only contrast also illustrates why the paired estimand matters: its marginal medians are $0.998$ (static) and $0.983$ (sequential), yet within-instance differences favor static Task-Var in 60 trials, with a paired median $0.016$ and $p_{\rm Holm}=0.021$. Thus task integration produces the robust measured gap; sequential re-scoring alone is instance-dependent and does not explain RMWorld's advantage. Fig.~\ref{fig:3gpp-audit}(b) also prevents the posterior identity from being mistaken for a realized-error guarantee. Over 280 acquisition transitions, the instrumented $\sum_u\tr(\bm H_u\bm\Sigma_u)$ trajectory never increases and falls by a median $75.9\%$, whereas realized wRMSE falls by only $5.0\%$ at the endpoint, increases in 117 transitions, and finishes above its warm-start value in 9 of 40 trials. This gap is compatible with Theorem~\ref{thm:voi}: the theorem controls posterior variance inside the local residual model, not omitted shadow structure, refit-induced derivative changes, or realized squared bias.

\subsection{End-to-End DeepMIMO Control}
\label{sec:deepmimo-results}

\begin{figure}[!t]
  \centering
  \includegraphics[width=\columnwidth]{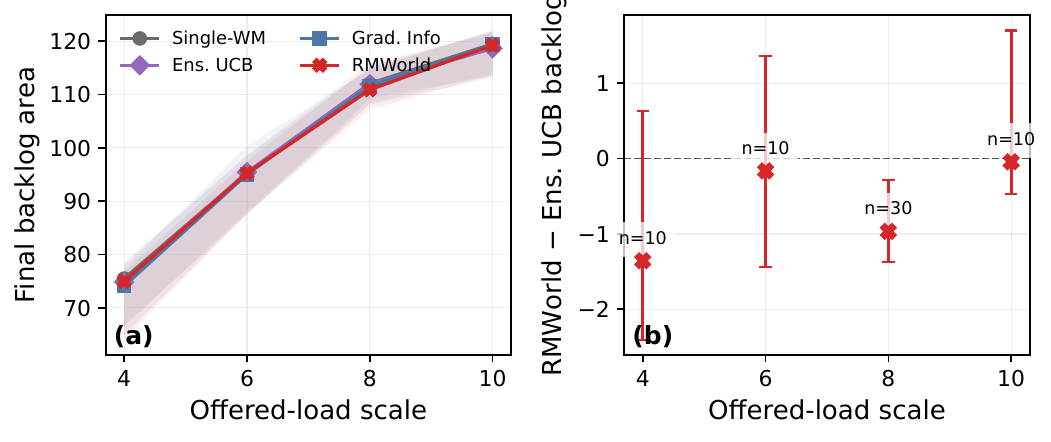}
  \caption{Same-protocol DeepMIMO load boundary. Panel (a) reports marginal median backlog with interquartile bands; panel (b) reports the within-seed median RMWorld-minus-Ensemble-UCB effect and 95\% bootstrap interval. Negative effects favor RMWorld. Load 8 is the prespecified 30-seed comparison; loads 4, 6, and 10 are 10-seed exploratory probes outside the Holm family.}
  \label{fig:deepmimo-evidence}
\end{figure}

\begin{table}[!t]
\centering
\caption{Paired high-load DeepMIMO control evidence over 30 seeds. Backlog is median [IQR]. The lower panel reports RMWorld-minus-comparator paired effects and Holm-adjusted Wilcoxon tests. Single-Trial radio WM is an unequal-budget, low-cost reference.}
\label{tab:deepmimo-main}
\scriptsize
\setlength{\tabcolsep}{2.7pt}
\resizebox{\columnwidth}{!}{%
\begin{tabular}{@{}l|c|r|r@{}}
\hline
Method & Backlog [IQR] & Radio-WM calls & Yield (\%) \\
\hline
Single-Trial radio WM & 112.055 [108.058,114.903] & \best{24} & 0.0 \\
Random MT & 112.529 [108.119,115.932] & \second{192} & 41.7 \\
Action Entropy & \second{111.129} [108.269,114.805] & \second{192} & 41.7 \\
Ensemble UCB & 111.926 [108.318,116.001] & \second{192} & 33.3 \\
Gradient Info & 111.583 [109.245,114.654] & \second{192} & \second{54.2} \\
\rowcolor{blue!8} \best{RMWorld} & \best{110.896} [107.185,115.296] & 264 & \best{62.5} \\
\rowcolor{gray!15} Gain vs. Ens. UCB & \best{$-0.967$ paired} & $+72$ & \best{$+29.2$ pp} \\
\hline
\end{tabular}}
\vspace{2pt}

\resizebox{\columnwidth}{!}{%
\begin{tabular}{@{}l|c|c|c@{}}
\hline
Comparator & RMWorld $\Delta$ [95\% CI] & W/T/L & $p_{\rm Holm}$ \\
\hline
Single-Trial radio WM & -0.677 [-0.997,-0.063] & 20/0/10 & 0.028 \\
Random MT & -0.503 [-1.337,0.339] & 19/0/11 & 0.046 \\
Action Entropy & -0.679 [-1.358,0.267] & 19/0/11 & 0.095 \\
\rowcolor{red!6} Ensemble UCB & \best{-0.967 [-1.372,-0.290]} & \best{23/0/7} & \best{0.010} \\
Gradient Info & -0.171 [-0.721,0.455] & 17/0/13 & 0.339 \\
\hline
\end{tabular}}
\end{table}

RMWorld has the lowest descriptive median backlog, $110.896$, in the high-load control task. The strongest supported equal-budget external comparison is Ensemble UCB: the paired median difference is $-0.967$, its bootstrap interval remains below zero, RMWorld wins 23 of 30 seeds, and the Holm-adjusted test gives $p=0.010$. This comparison matters because Ensemble UCB receives the same branch pool and high-fidelity audit count but chooses optimistic predicted improvement rather than task-credible gradient evidence. The result supports end-to-end transfer of the selection principle to a channel map that is outside the analytic formula family. The remaining contrasts prevent a universal superiority statement: the nominal test against Random Multi-Trial gives $p=0.046$, but its bootstrap interval crosses zero, so the evidence is fragile across summaries. Action Entropy and Gradient Information are unresolved with adjusted $p$ values of $0.095$ and $0.339$. Single-Trial radio WM gives a favorable paired result but uses only 24 radio-WM calls and is not an equal-budget selector. Accordingly, the paper claims a supported advantage over Ensemble UCB and reports the other comparisons without collapsing them into an ``all baselines'' claim.

Fig.~\ref{fig:deepmimo-evidence} marks the operating boundary rather than implying monotone gain. The exploratory RMWorld-minus-UCB effects are $-1.356$ [$-2.405,0.631$], $-0.164$ [$-1.445,1.364$], and $-0.043$ [$-0.475,1.700$] at loads 4, 6, and 10; only the prespecified load-8 interval excludes zero. Loads 4 and 10 also expose why the paired estimand is indispensable: their marginal RMWorld/UCB medians are $74.954/74.830$ and $119.250/118.666$, respectively, although the median within-seed differences are negative. This apparent discrepancy is not contradictory because medians are not additive, and it rules out a universal or load-monotone advantage by localizing the supported result to the severe load-8 regime. Table~\ref{tab:deepmimo-main} then exposes the corresponding cost-quality tradeoff. RMWorld uses 264 radio-WM rollout calls versus 192 for each multi-trial selector, and its median runtime is $33.15$~s versus $26.4$--$26.8$~s, amounting to $37.5\%$ more calls and about $25\%$ more wall time than Ensemble UCB. Its confirmed-label yield is $62.5\%$, 29.2 percentage points above UCB and 8.3 points above Gradient Information. Yield remains descriptive because it is an intermediate mechanism, not the prespecified endpoint.

\subsection{Mechanism Audits and Stress Boundaries}
\label{sec:audits}

The selector-controlled DeepMIMO experiment is the appropriate external comparison because all multi-trial methods share the same update machinery. Additional experiments diagnose paired removals within the RMWorld family, candidate-pool sensitivity, and a longer training horizon; they are not pooled into the main ranking. Their purpose is to expose when the proposed ingredients matter and where extra computation stops improving the endpoint. The paired removal audit in Table~\ref{tab:internal-audit} therefore asks a narrower question: Basic Gradient Information removes update alignment and conflict projection, while Gradient Information with Update Alignment removes only conflict projection. The reported difference is RMWorld minus the ablation, so a negative value favors the full update. Holm correction covers all scene-by-ablation comparisons in this audit family. Only the severe O1 altitude shift resolves the contribution beyond the adjusted threshold.

\begin{table}[!t]
\centering
\caption{Paired internal RMWorld audit over 30 seeds per row. Removed A+P denotes update alignment and conflict projection; P denotes projection only. Negative differences favor the full update, and $p_{\rm H}$ corrects all ten contrasts.}
\label{tab:internal-audit}
\scriptsize
\setlength{\tabcolsep}{2.0pt}
\resizebox{\columnwidth}{!}{%
\begin{tabular}{@{}l|c|c|c|c|c@{}}
\hline
Scene & Removed & Median $\Delta$ [95\% CI] & W/T/L & $p_{\rm H}$ & RBC \\
\hline
O1/0.0 m & A+P & -0.012 [-0.056,0.063] & 16/0/14 & 1.0000 & 0.006 \\
O1/0.0 m & P & -0.043 [-0.087,0.002] & 19/0/11 & 1.0000 & -0.260 \\
O1/0.8 m & A+P & -0.018 [-0.061,0.025] & 17/0/13 & 1.0000 & -0.028 \\
O1/0.8 m & P & -0.016 [-0.103,0.004] & 18/0/12 & 0.9546 & -0.277 \\
O1/1.6 m & A+P & -0.029 [-0.094,-0.007] & 21/0/9 & 0.6619 & -0.260 \\
O1/1.6 m & P & \second{-0.066 [-0.122,-0.013]} & 23/0/7 & \second{0.0602} & -0.518 \\
O1/2.4 m & A+P & -0.027 [-0.083,-0.002] & 21/0/9 & 1.0000 & -0.131 \\
\rowcolor{blue!8} \best{O1/2.4 m} & \best{P} & \best{-0.070 [-0.108,-0.025]} & \best{24/0/6} & \best{0.0109} & \best{-0.622} \\
ASU/cross & A+P & 0.204 [-1.708,2.366] & 13/0/17 & 1.0000 & 0.015 \\
ASU/cross & P & 0.136 [-0.898,1.120] & 13/0/17 & 1.0000 & -0.002 \\
\hline
\end{tabular}}
\end{table}

At the 2.4~m O1 shift, adding conflict projection to the aligned update produces a paired median difference of $-0.070$, a bootstrap interval of $[-0.108,-0.025]$, 24 wins in 30 seeds, and $p_{\rm Holm}=0.0109$. The milder O1 shifts show favorable paired directions but do not survive family-wise correction. The ASU cross-scene rows do not favor RMWorld and have wide intervals, indicating that the same local update geometry does not transfer automatically across map families. This pattern supports the projection as a severe-mismatch stabilizer, not a universal source of backlog reduction. A separately frozen branch-budget sweep gives an additional non-result: increasing $(B,K)$ from $(3,1)$ to $(7,5)$ changes median backlog only from $38.840$ to $38.910$, while radio-WM calls double from 96 to 192. The default $(5,2)$ is therefore a moderate-cost operating point, not an empirically unique optimum.

\subsection{Representation Sensitivity and Boundary Tests}
\label{sec:sensitivity}

\begin{figure}[!t]
  \centering
  \includegraphics[width=\columnwidth]{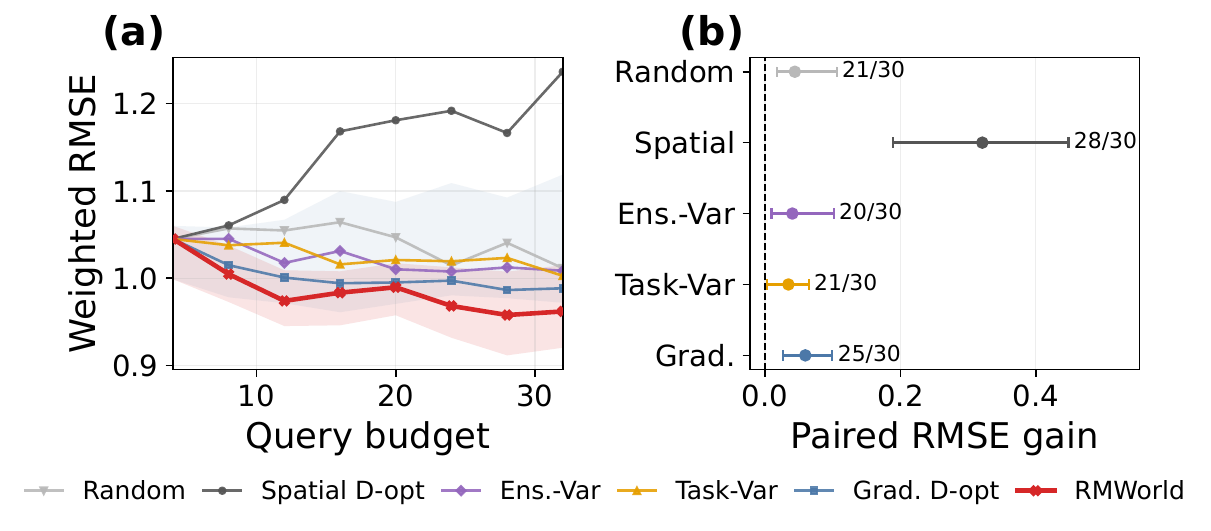}
  \caption{Neural-feature radio-WM sensitivity over 30 paired trials. The residual model uses a frozen two-layer tanh random-feature encoder with a query-trained ridge output head. Panel (a) reports RMSE versus query budget, and panel (b) reports paired final effects. This experiment does not constitute fully end-to-end neural-network training.}
  \label{fig:neural-sensitivity}
\end{figure}

\begin{table}[!t]
\centering
\caption{Neural-feature radio-WM sensitivity over 30 paired trials at 32 labels. The upper panel reports marginal endpoints; the lower panel reports comparator-minus-RMWorld wRMSE effects. Positive effects and RBC favor RMWorld; $p_{\rm H}$ is Holm adjusted.}
\label{tab:neural-results}
\scriptsize
\setlength{\tabcolsep}{2.7pt}
\resizebox{\columnwidth}{!}{%
\begin{tabular}{@{}l|c|c|r@{}}
\hline
Method & wRMSE [IQR] & Regret [IQR] & Time (ms) \\
\hline
Random & 1.012 [0.934,1.133] & 0.104 [0.083,0.132] & \best{116.9} \\
Spatial D-opt & 1.236 [1.141,1.433] & 0.125 [0.101,0.156] & 157.0 \\
Ens.-Var & 1.009 [0.954,1.063] & 0.128 [0.086,0.137] & \second{123.0} \\
Task-Var & 1.003 [0.945,1.062] & \second{0.102} [0.072,0.119] & 124.7 \\
Grad. D-opt & \second{0.989} [0.951,1.130] & 0.109 [0.079,0.134] & 383.9 \\
\rowcolor{blue!8} \best{RMWorld} & \best{0.962} [0.906,1.052] & \best{0.095} [0.068,0.116] & 446.6 \\
\rowcolor{gray!15} RMWorld gain vs. Grad. D-opt & \best{2.7\%} & \best{0.014} & $+62.7$ \\
\hline
\end{tabular}}
\vspace{2pt}

\resizebox{\columnwidth}{!}{%
\begin{tabular}{@{}l|c|c|c|c@{}}
\hline
Comparator & Gain [95\% CI] & W/T/L & $p_{\rm H}$ & RBC \\
\hline
Random & 0.044 [0.018,0.106] & 21/0/9 & 0.012 & 0.587 \\
Spatial D-opt & 0.321 [0.190,0.448] & 28/0/2 & $<0.001$ & 0.987 \\
Ens.-Var & 0.040 [0.010,0.102] & 20/0/10 & 0.037 & 0.488 \\
Task-Var & 0.035 [0.003,0.065] & 21/0/9 & 0.037 & 0.480 \\
Grad. D-opt & 0.060 [0.026,0.100] & 25/0/5 & 0.004 & 0.665 \\
\hline
\end{tabular}}
\end{table}

The neural-feature study changes the residual representation while keeping the channel, task, and budget instances paired with the first 30 formula trials. RMWorld reaches a median RMSE of $0.962$~bit/s/Hz, compared with $1.009$ for Ensemble Variance, $1.003$ for Task-Weighted Variance, and $0.989$ for Gradient D-opt. It wins 20, 21, and 25 of 30 paired comparisons, respectively; all Holm-adjusted tests remain below $0.05$ (range $0.004$--$0.037$). Paired formula and neural RMWorld errors have Spearman correlation $0.79$, while the median neural-minus-formula shift is $-0.011$~bit/s/Hz. These are representation diagnostics, not evidence of end-to-end deep-model generalization. The association diagnostic is weaker than the primary RMSE result: RMWorld and Task-Weighted Variance have medians of $0.095$ and $0.102$~bit/s/Hz, and the table does not attach a corrected decision-level test. This descriptive difference is consistent with, but does not resolve, behavior at discontinuous association boundaries. It also shows why the paper does not treat association regret as a guaranteed monotone consequence of integrated variance reduction. The primary conclusion remains about task-weighted radio-WM error.

The 30-seed long-horizon stress states another boundary rather than hiding it. After 12 policy updates, RMWorld has median backlog $38.146$, while Single-Trial radio WM, Basic Gradient Information, and Gradient Information with alignment obtain $38.448$, $37.879$, and $38.298$. Their interquartile ranges overlap substantially, so the experiment does not support long-horizon superiority. This outcome suggests that the available radio-WM signal saturates before additional counterfactual updates create stable separation; more training is not a substitute for better evidence quality here. Across the three evidence blocks, offline computation is therefore the consistent tradeoff. Channel-level RMWorld is slower because it recomputes integrated variance reductions, and policy-level RMWorld uses additional counterfactual rollouts plus fixed-batch validation. The method is query-budget matched in its primary comparisons but is not uniformly training-compute matched. Its deployment cost remains one policy forward pass because acquisition, formula fitting, branch selection, and line search stop after training. This distinction prevents ``sample efficient'' from being used as a synonym for ``computationally cheap.''

\subsection{Threats to Validity and Reproducibility}

\begin{table}[!t]
\centering
\caption{Validity matrix. Each mitigation is enforced in the reported protocol; the last column states the claim that remains outside the evidence.}
\label{tab:validity}
\scriptsize
\setlength{\tabcolsep}{2.5pt}
\resizebox{\columnwidth}{!}{%
\begin{tabular}{@{}l|l|l@{}}
\hline
Threat & Protocol mitigation & Remaining scope \\
\hline
Oracle realism & 3GPP formula + held-out DeepMIMO & No field measurement \\
WM misspecification & Biased physics + hidden shadow + neural features & No dynamic blockage \\
\rowcolor{blue!8} \best{Causal attribution} & \best{One level varied per primary block} & \best{Joint synergy not identified} \\
Multiplicity & Frozen families, paired CI/W/T/L, Holm tests & Load scan exploratory \\
Compute fairness & Queries fixed; calls and time reported & Training FLOPs unequal \\
Representation & Formula and random-feature residuals & No end-to-end neural WM \\
\hline
\end{tabular}}
\end{table}

The evidence supports a scoped causal claim rather than universal radio intelligence. The standardized 3GPP block makes the true and estimated propagation formulas explicit and introduces hidden correlated shadow structure that the learner cannot observe directly~\cite{3GPPTR36777}. This transparency isolates evidence allocation from an opaque oracle, but the rates remain simulated rather than measured. Freezing the channel field, task distribution, warm start, formula ensemble, and query budget within every paired trial removes the main nuisance variation between selectors. Separating channel calibration from policy selection across the two primary blocks further prevents a compound improvement from being assigned to the wrong level. The exact identity in Theorem~\ref{thm:voi}, however, is conditional on the local linear-Gaussian residual and therefore does not guarantee monotone realized error under omitted physics. Discontinuous association and collision outcomes likewise remain diagnostics unless their own corrected tests are reported. Table~\ref{tab:validity} records these boundaries beside the mitigation that makes each retained claim interpretable.

External validity is tested by moving policy learning to a held-out DeepMIMO ray-tracing slice whose rates are outside the analytic formula family~\cite{alkhateeb2019deepmimo}, but one outdoor map cannot represent all blockage, mobility, or fleet regimes. The load scan and cross-scene audit are therefore used to locate failure boundaries, not to enlarge the confirmatory family after observing results. The neural-feature study changes the residual basis under matched channel and task instances, yet its frozen random encoder does not establish end-to-end neural-radio-WM generalization. Computational validity is treated separately from query efficiency: label and audit budgets are matched where claimed, while rollout calls, runtime, and unequal-cost references remain visible. Reproducibility follows the same separation of concerns. Every headline statistic is regenerated from stored per-trial arrays through deterministic analysis scripts, the common-random-number streams and tie-breaking order are frozen, and the artifact manifest links each table or figure to its aggregate source. These controls make the reported paired estimands auditable, although independent measured-channel replication remains necessary before deployment claims can be made.

\section{Conclusion}
\label{sec:conclusion}

RMWorld has formulated multi-UAV learning with imperfect radio world models as two coupled evidence-allocation problems and has linked task-integrated radio-WM calibration to update-aligned counterfactual selection. It has established an exact local channel variance-reduction identity, a submodular greedy guarantee for branch selection, and a scoped first-order result for conflict projection. The resulting framework improves the consistency of task-weighted formula-based radio-WM calibration and provides a supported severe-load gain over Ensemble UCB; load probes delimit that gain as nonuniversal, while cost audits expose the offline computation and unresolved comparisons. Future work will test measured channels, dynamic blockage, larger fleets, and fully trained neural radio world models.

\bibliographystyle{IEEEtran}
\bibliography{ref}

\end{document}